\newtheorem{theorem}{Theorem}
\newtheorem{Lemma}[theorem]{Lemma}
\newtheorem{Proposition}[theorem]{Proposition}
\def\blfootnote{\xdef\@thefnmark{}\@footnotetext}
\begin{document}

\preprint{APS/123-QED}

\title{Handling Leakage with Subsystem Codes}

\author{Natalie C. Brown}
\affiliation{School of Physics, Georgia Institute of Technology, Atlanta, GA 30332, USA}
\author{Michael Newman}
\email{michael.newman@duke.edu}
\affiliation{Departments of Electrical and Computer Engineering, Chemistry, and Physics, Duke University, Durham, NC, 27708, USA}
\author{Kenneth R. Brown}
\email{kenneth.r.brown@duke.edu}
\affiliation{School of Physics, Georgia Institute of Technology, Atlanta, GA 30332, USA}
\affiliation{Departments of Electrical and Computer Engineering, Chemistry, and Physics, Duke University, Durham, NC, 27708, USA}

\date{\today}
\begin{abstract}
Leakage is a particularly damaging error that occurs when a qubit state falls out of its two-level computational subspace.  Compared to independent depolarizing noise, leaked qubits may produce many more configurations of harmful correlated errors during error-correction.  In this work, we investigate different local codes in the low-error regime of a leakage gate error model.  When restricting to bare-ancilla extraction, we observe that subsystem codes are good candidates for handling leakage, as their locality can limit damaging correlated errors.  As a case study, we compare subspace surface codes to the subsystem surface codes introduced by Bravyi et al.  In contrast to depolarizing noise, subsystem surface codes outperform same-distance subspace surface codes below error rates as high as $\lessapprox 7.5 \times 10^{-4}$ while offering better per-qubit distance protection.  Furthermore, we show that at low to intermediate distances, Bacon-Shor codes offer better per-qubit error protection against leakage in an ion-trap motivated error model below error rates as high as $\lessapprox 1.2 \times 10^{-3}$. For restricted leakage models, this advantage can be extended to higher distances by relaxing to unverified two-qubit cat state extraction in the surface code. These results highlight an intrinsic benefit of subsystem code locality to error-corrective performance.

\end{abstract}

\maketitle


\section{Introduction}
Quantum information is subject to many different noise processes that depend upon the platform in which it is physically encoded.  Quantum error-correcting codes can be used to preserve such information for any length computation, given certain assumptions on the strength and correlations of the error processes \cite{Aliferis:2006, Knill:1996b}.  In this work, we address an error model pertinent to many qubit architectures, leakage, and identify subsystem codes well-suited to correct it.

\subsection{Leakage}
Leakage is a particularly damaging error that occurs when a qubit excites to a state outside of its two-level computational subspace \cite{byrd2004overview,wood2018quantification, wallman2016robust, wu2002efficient, byrd2005universal, wang2018adiabatic, jing2015nonperturbative, sun2017simple}.  Much work has been devoted to characterizing and reducing leakage at the experimental level in different qubit architectures, including photonics \cite{fortescue2014fault}, quantum dots \cite{fong2011universal, mehl2015fault, sala2018leakage, cerfontaine2016feedback, cerfontaine2019feedback, Chan:2019}, superconductors \cite{fazio1999fidelity, liang2005loss, tao2006numerical, ghosh2013understanding, mcconkey2017mitigating, ghosh2015leakage, mckay2017efficient, shim2016semiconductor, jerger2016realization, herrera2013tradeoff, terhal2019leakage}, topological qubits \cite{ainsworth2011topological}, and ion traps \cite{haffner2008quantum, brown2018comparing}.  Unlike erasure, the locations of leakage errors are unknown, and so it is far more damaging \cite{stace2009thresholds, fujii2012error, barrett2010fault}.  

Leakage rates are significant for several types of qubits \cite{brown2018comparing, strikis2018}, and can even be the dominant error rate in the system \cite{fortescue2014fault, andrews2018}.  However, even when leakage is relatively less likely than other types of noise, the havoc it wreaks at the level of error-correction can quickly make it a limiting error process \cite{brown2018comparing, suchara2015leakage}.  This is for two reasons.  First, standard error-correction is not equipped to handle errors acting outside the computational subspace, and so leakage can persist until it is corrected by some other means.  Second, before it is corrected, leakage can cause damaging patterns of correlated errors that depend on the specifics of a platform's gate implementations.  

Leakage can also present a fundamental error when compared to other noise sources.  For example, while improved controls are continually reducing the dominating effects of coherent errors in ion-trap technologies, scattering events that cause leakage are more difficult to subdue. Consequently, although lower leakage rates make it a lesser concern for certain architectures currently, handling leakage in the low-error regime is vitally important for performing long computations reliably in the future.

Fortunately, one can reduce leakage by regularly swapping or teleporting qubits to the computational subspace \cite{mochon2004anyon, suchara2015leakage}, and it has been rigorously shown that a threshold exists in this model \cite{aliferis2005fault}.  This introduces a balancing act.  Eliminating leakage too frequently incurs significant qubit and gate overhead, which in turn introduces more potential circuit faults.  Eliminating leakage too infrequently allows leaked qubits to interact with many other qubits, introducing higher-weight correlated errors.

\subsection{Local Codes}
In this work, we focus on leakage in geometrically local codes, which are leading candidates for implementing quantum memories. Included in this category are topological codes, which exhibit some of the highest thresholds, support minimal overhead syndrome extraction, and are naturally amenable to qubit architectures that prefer local interactions \cite{Dennis:2002}.

Local codes are also a natural candidate for coping with leakage, as the number of qubits that may interact with any one leakage event is limited.  Consequently, previous works have focused on quantifying the performance of topological \emph{subspace} codes in the presence of leakage, including the surface code \cite{brown2018comparing, suchara2015leakage, fowler2013coping, mehl2015fault, ghosh2015leakage} and $[[7,1,3]]$ Steane or color code \cite{fortescue2014fault}.  

The (subspace) surface code is a particularly enticing choice as, in many respects, it is the top-performing quantum memory and a quantum analogue of the repetition code \cite{fowler2012towards,bravyi2010tradeoffs}.  This analogy can be made precise, in the sense that the surface code nearly saturates the CSS zero-rate entropic bound $2H(p_{\text{thr}}) \leq 1$ and can be realized as the hypergraph product of the repetition code with itself \cite{calderbank1996good, li20182, tillich2014quantum, Yoder:2019}.  Among its desirable features is the use of a single ancilla to extract syndromes, relevant to both its high performance and square lattice locality \cite{Dennis:2002,Tomita:2014}.

However, the surface code still suffers from significantly reduced thresholds \cite{suchara2015leakage} and an effective distance reduction \cite{fowler2013coping,brown2018comparing} in the presence of leakage.  It is then natural to ask if other local code families may perform better in certain regimes of a leakage error model.  

\subsection{Subsystem Codes}

Diverging from previous works, our focus will be on \emph{subsystem} codes, and in particular, subsystem surface and Bacon-Shor codes.  Subsystem codes are a generalization of subspace codes in which certain logical degrees of freedom are treated as gauges, and allowed to vary freely \cite{Bacon:2006, kribs2006operator}. Unlike topological subspace stabilizer codes, which require at least $4$-local interactions in a nearly Euclidean lattice \cite{aharonov2011complexity}, topological subsystem codes can be realized with $3$- or even $2$-local interactions \cite{bravyi2013subsystem, Bombin:2010b, bombin2012universal, suchara2011constructions}.  If one sacrifices topological protection, then non-topological local codes like the Bacon-Shor family can still yield good error protection at reasonable error rates with only $2$-local interactions and minimal overhead \cite{Napp:2012, brooks2013fault}.

Generally, subsystem codes come with certain intrinsic advantages.  Their increased locality is even more desirable for physically local qubit architectures.  They allow for simpler and parallelized syndrome extraction, which mitigates the circuit-depth required to measure higher-weight stabilizers \cite{Aliferis:2007}.  Geometrically constrained codes can yield better encoding rates compared to their subspace cousins \cite{Bravyi:2010, Yoder:2019, bravyi2010tradeoffs}. Finally, subsystem codes can yield tremendous advantages for implementing simple and universal fault-tolerant logic \cite{Paetznick:2013, Bombin:2013, yoder2017universal}, bypassing the high cost of magic-state distillation \cite{bravyi2005universal}.

However, the advantages of subsystem code locality often come at a high cost to their error-corrective performance.  While gauge degrees of freedom are useful for implementation in local architectures and fault-tolerant computation, they introduce more locations for potential circuit faults without giving additional information about the errors that are occurring.  This generally manifests as less reliable high-weight stabilizers and lower thresholds.

\subsection{Contributions}

In this work, we highlight a simple and intrinsic error-corrective advantage for these gauge degrees of freedom: limiting correlated errors due to leakage.  This differs from an independent depolarizing model, in which local measurements do not reduce the set of correlated errors.  

To be explicit, suppose we wanted to measure a stabilizer $S=G_1 \ldots G_\ell$, where each $G_i$ is a gauge operator.  If we measure each gauge separately, then a single fault while measuring $G_i$ will produce an error supported entirely on the support of $G_i$.  If instead we measure the entire stabilizer directly, then a single fault could produce an error supported on the support of many $G_i$. However, in an independent depolarizing model, as long as we measure along the gauge operators, this error will take the form $EG_{i+1} \ldots G_n$ where $E$ is supported entirely on the support of $G_i$.  In particular, the set of correlated errors is the same with or without local measurements up to a symmetry of the logical subsystem.  For some gate-error models with non-local interactions, it is even advantageous to measure the larger stabilizers directly \cite{Li:2018, li20182}. 

However, this is not true for typical leakage models, as a single leakage can produce a much richer set of correlated errors. Consequently, circuits that are fault-tolerant in a depolarizing model may no longer be fault-tolerant in a leakage model.  We illustrate this difference in the single ancilla extraction of a Bacon-Shor code stabilizer with both Pauli and leakage faults in Figure~\ref{Bacon-Shor}.

\begin{figure}[htb!]
\includegraphics[width=\linewidth]{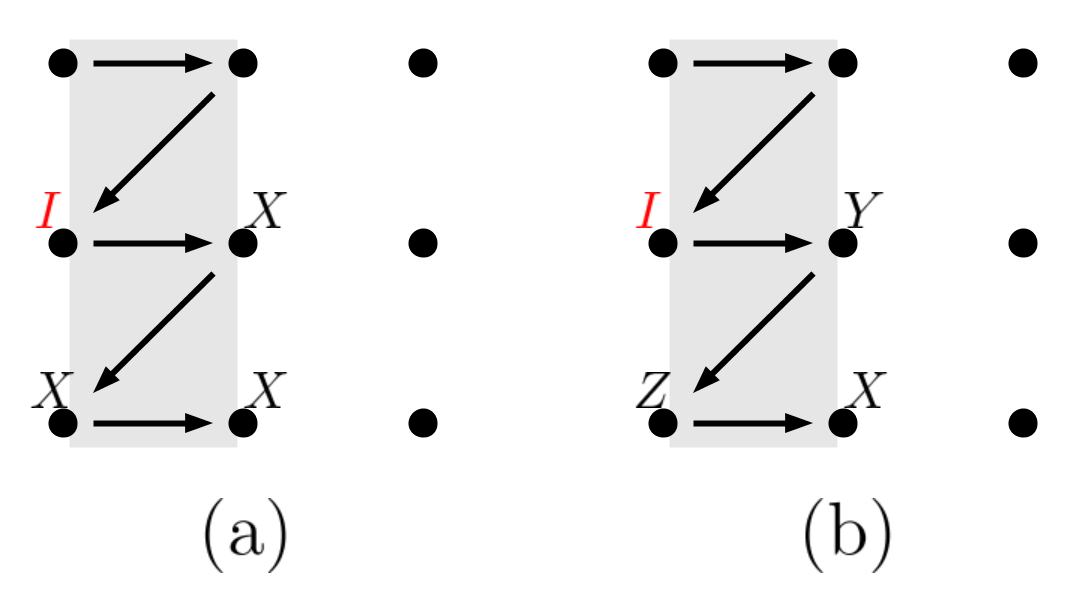}
\caption{Syndrome extraction in a vertical distance-$3$ $X$-type Bacon-Shor stabilizer (shaded) using a single ancilla \cite{Li:2018}, with horizontal $XX$ and vertical $ZZ$ gauge operators.  Implicit in the diagram is a Pauli $X$-error on the ancilla in (a), and a leaked ancilla in (b), both occurring after the third gate during extraction. The correlated error in (a) is equivalent to a single data qubit $X$-error up to gauge transformation.  If, for example, the leaked ancilla in (b) depolarizes each qubit it interacts with, then it may produce a damaging higher-weight correlated error like the one shown.}
\label{Bacon-Shor}
\end{figure}

We compare three code families in the presence of leakage: subspace surface codes, subsystem surface codes, and Bacon-Shor codes, while restricting to bare-ancilla syndrome extraction.  We consider the first two in both standard and rotated lattice geometries, and under various leakage reduction techniques.  In stark contrast to depolarizing noise, we find that subsystem surface codes actually outperform their same-distance subspace cousins at equal leakage-to-depolarizing ratios in the low-error regime.  Furthermore, we observe that low to intermediate distance Bacon-Shor codes outperform subspace surface codes in per-qubit leakage protection at similar error rates.  These advantages are due essentially to the subsystem codes' handling of uncontrolled `hook' errors during syndrome extraction with leaked qubits. Finally, if one allows for larger ancilla states in a restricted leakage model, we observe that intermingling minimal leakage reduction with partial fault-tolerance in the subspace code ultimately yields the best overall low-error performance. Our central contributions are the following.

\begin{enumerate}
    \item[$(i)$]
    We motivate a Pauli-approximated ion-trap leakage model, and demonstrate significantly improved performance compared to a worst-case stochastic leakage model. This model is similar to a photonics leakage model \cite{fortescue2014fault}.
    \item[$(ii)$]
    We show that any topological subspace stabilizer code must incur a linear reduction in effective distance when using minimal leakage reduction to correct worst-case stochastic leakage.  In the case of subspace surface codes, the effective code distance is halved \cite{fowler2013coping, brown2018comparing, suchara2015leakage}.
    \item[$(iii)$]
    We define a rotated variant of the subsystem surface codes proposed in \cite{bravyi2013subsystem} which yields reduced qubit overhead. We argue that these codes (along with their un-rotated variant) yield better per-qubit distance protection than subspace surface codes in the presence of leakage.  We give coarse upper-bounds on the error regime in which this advantage manifests, with $p \lessapprox 2.5 \times 10^{-4}$ with worst-case stochastic leakage and $p \lessapprox 0.32 \times 10^{-4}$ with ion-trap leakage.
    \item[$(iv)$]
    We provide Monte Carlo simulations of both the subspace and subsystem surface codes at low error rates using no qubit overhead for leakage elimination.  We find that subsystem surface codes outperform same-distance, same-geometry subspace surface codes for error rates $\lessapprox 0.75 \times10^{-3}$ with worst-case stochastic leakage.  This is increased to $\lessapprox 2.0 \times 10^{-3}$ for ion-trap leakage.
    \item[$(v)$]
    We additionally compare Bacon-Shor codes with serialized syndrome extraction to subspace surface codes with ion-trap leakage.  We observe that at low distances, Bacon-Shor codes per-qubit outperform surface codes in the $10^{-4}-10^{-3}$ error range.  This advantage persists even when the depolarizing rate is many times stronger than the leakage rate. 
    \item[$(vi)$]
    Finally, we compare different leakage reduction techniques in an ion-trap leakage model of the surface code while allowing for larger ancilla states.  We observe that combining unverified two-qubit cat state extraction with minimal leakage reduction yields an orders-of-magnitude improvement over existing fault-tolerant proposals \cite{suchara2015leakage}, and also begins to relatively outperform non-fault-tolerant strategies in the $\lessapprox 10^{-3}$ error range.  This advantage may extend to higher distances, but relies unavoidably on an assumption of independent leakage events.
\end{enumerate}

At the heart of these results is a simple accounting of the correlated errors that may appear due to leakage of both data and ancilla qubits, and the quantification of the regime in which their effects dominate.  However, it underscores an advantage of local subsystem error-correction that may extend to other non-Markovian error models, with local baths that are determined by the qubit connectivity \cite{terhal2005fault}. In these cases, local checks minimize what can go wrong.

The paper proceeds as follows.  In Section~\ref{Leakage}, we define the two leakage models we consider, and introduce the leakage reduction techniques we will apply.  In Section~\ref{Robustness}, we compare the leakage robustness of subspace surface codes and subsystem surface codes by accounting for fault-paths generated by leakage.  Although we focus on low-overhead leakage reduction with bare-ancilla measurements, we touch on more general leakage reduction strategies in Appendix \ref{auxiliary}. In Section~\ref{Simulations} we provide numerics comparing subspace and subsystem codes under various leakage reduction schemes and constraints.  Finally, we conclude with some discussion and potential avenues for improvement in Section~\ref{Discussion}.  Additional technical details concerning the simulations and gate schedulings may also be found in Appendix \ref{scheduling}. 

\section{Leakage}\label{Leakage}

Qubits are not completely isolated two-level systems, and sometimes they have energetic access to other states.  Leakage occurs when a qubit state excites out of its computational subspace. 

We call a qubit \emph{leaked} if its state is supported on the subspace $\mathcal{H}_{\ket{2}}$, where $\ket{2}$ represents the leaked degree of freedom.  Otherwise we call the qubit \emph{sealed}.  Note that there may be many leakage states, and leakage need not take this restricted form, but for simplicity we group these into a single $\ket{2}$-state.

Although leakage may be modeled as a coherent process \cite{ghosh2013understanding}, we make a standard simplifying assumption \cite{suchara2015leakage} that leakage occurs stochastically via the channel $$\mathcal{E}_{\ell}(\rho) = (1-p_{\ell})\rho + p_{\ell}\sum\limits_{i=0}^2 \ket{2}\bra{i}\rho\ket{i}\bra{2}.$$
We refer to $p_{\ell}$ as the \emph{leakage rate} of the system.  This simplified error channel ensures that we only consider mixtures of leakage states and sealed states, which will be necessary for efficient simulation.  Additionally, we assume that each qubit undergoes a relaxation channel whenever it could also leak,
$$\mathcal{E}_{r}(\rho) = (1-p_{r})\rho + p_r\left(P_C\rho P_C + \frac{1}{2}\sum\limits_{i=0,1} \ket{i}\bra{2}\rho\ket{2}\bra{i}\right),$$
where $P_C = \ket{0}\bra{0} + \ket{1}\bra{1}$ is the projector onto the computational space, and $p_{r}$ is the \emph{relaxation rate}.  For simplicity, we assume $p_{r} = p_{l}$.

In addition to leakage, we also assume background depolarizing noise.  In order to probe the leakage behavior of the codes, we do not want the depolarizing noise to drown out leakage effects. However, not including depolarizing noise introduces peculiarities in the error model, such as perfect measurement.  Consequently, we choose our \emph{depolarizing rate} $p_{d}$ to be the same as our leakage rate $p_{\ell}$, unless otherwise specified.  An equal leakage-to-depolarizing ratio is directly applicable in certain systems \cite{brown2018comparing,andrews2018, fortescue2014fault}, but even with depolarizing noise that is many times stronger, leakage may remain a prohibitive error source \cite{brown2018comparing, suchara2015leakage}.  We define the \emph{error rate} of the system to be $p = p_{\ell} + p_{d}$, as the leakage and depolarizing error processes occur independently.

Our gate error model involves three fundamental operations: preparation into $\ket{0}$ or $\ket{+}$, a two-qubit CNOT gate, and measurement in the $Z$- and $X$-bases. It is important to note that our gate error model does not include idling errors, a choice we make to simplify gate scheduling.  While idling errors do not usually cause leakage, one must be careful to consider more careful parallelization in architectures with shorter coherence times.  

Preparation can fail by wrongly preparing an orthogonal state with probability $p_{d}$.  Additionally, a newly-prepared qubit can immediately leak with probability $p_{\ell}$; for some systems, this is a pessimistic assumption.

A two-qubit CNOT gate can fail by applying one of the $15$ non-identical two-qubit Pauli operators to its support, each with probability $p_{d}/15$.  For our simulations, we assume that leakage events occur with probability $p_{\ell}$ independently on the support of the gate.  This is well-motivated in some architectures \cite{suchara2015leakage}, but less so in others.  However, the subsystem code constructions can also handle correlated leakage events with small modification, which we discuss in Appendix~\ref{correlated}.   

Finally, measurements can return the wrong result with probability $p_{d}$. Whenever a leaked qubit is measured, it will always return eigenvalue $-1$.  Implicitly, this assumes two-level measurements that cannot distinguish between excited states, and is also relevant to many qubit architectures \cite{suchara2015leakage, jerger2016realization}.  However, when accounting for fault-paths, we consider a worst-case scenario in which leaked qubits produce randomized outcomes when measured.

\subsection{Leakage Models}

There is some freedom in extending the dynamics of the system to interactions between leaked qubits and sealed qubits. In particular, we assume that our two-qubit gates are themselves \emph{sealed}.  For $\mathcal{H}_C$ the computational Hilbert space and $\mathcal{H}_{\ket{2}}$ the leakage Hilbert space, this means that any two-qubit gate $U$ factors as
$$U=U_{\mathcal{H}_C\otimes \mathcal{H}_C} \oplus U_{\mathcal{H}_C\otimes \mathcal{H}_{\ket{2}}} \oplus U_{\mathcal{H}_{\ket{2}} \otimes \mathcal{H}_C} \oplus U_{\mathcal{H}_{\ket{2}} \otimes \mathcal{H}_{\ket{2}}}.$$  This restriction is physically motivated in many architectures, including both superconductors \cite{suchara2015leakage, fowler2013coping} and ion-traps \cite{brown2018comparing}.  In particular, $U$ does not directly propagate one leaked qubit into two leaked qubits.

The two-qubit gate $U_{\mathcal{H}_C \otimes \mathcal{H}_C}$ is the computational gate and $U_{\mathcal{H}_{\ket{2}} \otimes \mathcal{H}_{\ket{2}}}$ can only apply a harmless phase.  However, it remains to define $U_{\mathcal{H}_C \otimes \mathcal{H}_{\ket{2}}}$ and $U_{\mathcal{H}_{\ket{2}} \otimes \mathcal{H}_C}$, the interactions between leaked and sealed qubits.

\subsubsection{Depolarizing Leakage}

We consider two such definitions, resulting in two different leakage models.  The first is a worst-case stochastic model that has been considered in previous works \cite{suchara2015leakage,fowler2013coping}.  We call this model \emph{depolarizing} or $DP$- leakage, and it is defined symmetrically for both $U_{\mathcal{H}_C\otimes \mathcal{H}_{\ket{2}}}$ and $U_{\mathcal{H}_{\ket{2}} \otimes \mathcal{H}_C}$.  In the $DP$-leakage model, the leaked qubit remains leaked while a Haar random unitary is applied to the sealed qubit. This results in complete depolarization of the sealed qubit, and is the more damaging of the two leakage models.

\subsubsection{M{\o}lmer-S{\o}rensen Leakage}

The second is a more restrictive model in which a leaked qubit simply does not interact with a sealed qubit.  This presents an effective gate-erasure error, which we expect may apply to several different architectures.  One such example is the standard entangling operation for ion-traps: the M{\o}lmer-S{\o}rensen gate \cite{molmer1999multiparticle}.  

The two-qubit M{\o}lmer-S{\o}rensen interaction is mediated by the shared motional modes in an ion-trap.  The underlying Hamiltonian is a state-dependent displacement, which is based on near-resonant driving of motional sidebands that correspond to removing or adding phonons of motion with excitation of the internal ion.  The Hamiltonian parameters are chosen to remove any residual entanglement between the ion and motion at the end of the gate. The ion-ion entanglement arises from a second-order term due to the time-dependent displacement not commuting with itself in time.  

M{\o}lmer-S{\o}rensen gates are often optimized in a way that is independent of the magnitude of the driving field, with the residual entanglement minimized ion by ion, and do not require a second ion to be driven.  For a leaked ion, if the spacing between the motional modes is small compared to the separation between the computational energy state and the leakage energy state, then the lasers that drive the displacement are far from resonant and both to the same side of the carrier transition.  The result is that the leaked ion is only very weakly displaced and therefore does not generate an entangling gate with the sealed ion, similar to a photonics model \cite{fortescue2014fault}.  

The M{\o}lmer-S{\o}rensen gate does not occur and the sealed ion is unaffected. For the full CNOT gate, the net effect is that different single qubit Pauli rotations are applied to the control and target qubits, see Figure~\ref{Molmer-Sorensen}.  
\begin{figure}[htb!]
\includegraphics[width=\linewidth]{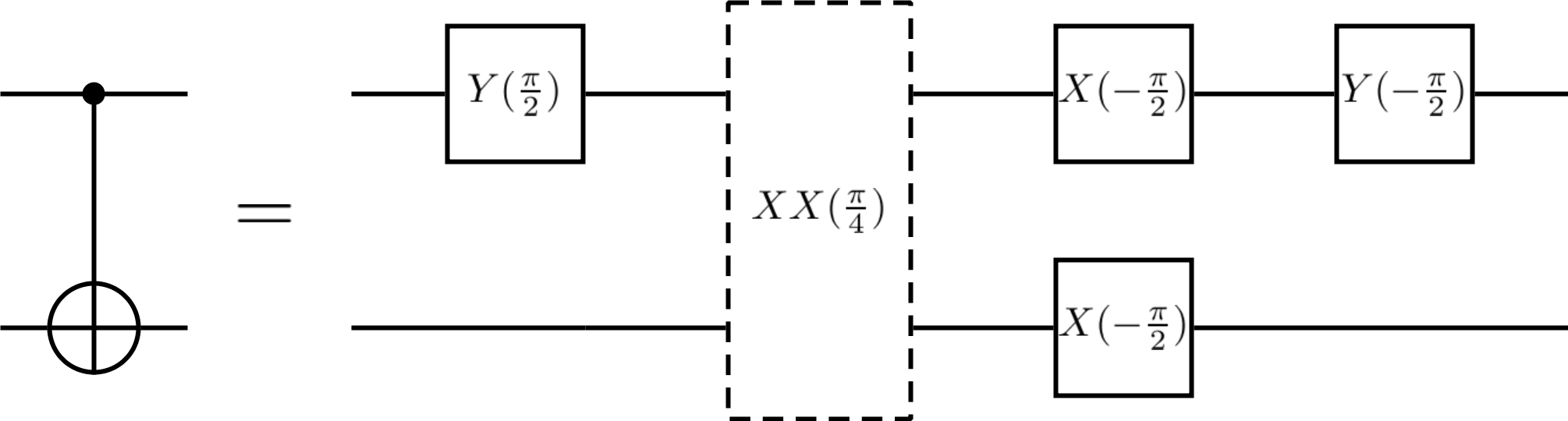}
\caption{An ion-trap CNOT gate, expressed as a product of native Pauli rotations. In the presence of a leaked qubit, the dashed M{\o}lmer-S{\o}rensen gate is not applied.}
\label{Molmer-Sorensen}
\end{figure}
The target undergoes an $X(-\pi/2)$-rotation along the Bloch sphere, while the control undergoes an $Z(-\pi/2)$-rotation.  Whichever qubit is leaked is unaffected by the sealed single-qubit gate.  In order to simulate leakage in the Pauli frame model, we replace these channels with their Pauli-twirl approximations.  This yields the stochastic error channels
\begin{align*}
\mathcal{E}_{bit}(\rho) &= \frac{1}{2}\rho + \frac{1}{2}X\rho X \text{, and} \\
\mathcal{E}_{phase}(\rho) &= \frac{1}{2}\rho + \frac{1}{2}Z\rho Z,
\end{align*}
where $\mathcal{E}_{bit}$ is applied to the target when the control is leaked, and $\mathcal{E}_{phase}$ is applied to the control when the target is leaked.  We refer to this model as \emph{M{\o}lmer-S{\o}rensen} or $MS$- leakage.  The twirl of this restricted model will yield significantly improved performance compared to $DP$-leakage, although one must be careful to weigh this against the generic benefit of twirling \cite{magesan2013modeling, iyer2018small}.

\subsection{Leakage Reduction}

As errors preserving the computational subspace accumulate, error-correction will periodically remove them.  However, standard error-correction does nothing to remove leakage.  Every leakage event will eventually relax back to the computational subspace, but these long-lived leakage errors will corrupt the surrounding qubits for several rounds of error-correction \cite{fowler2013coping}.  Without an active approach to remove leakage, it will completely compromise the efficacy of the code, see Figure~\ref{No-LRU}.

\begin{figure}[htb!]
\includegraphics[width=.8\linewidth]{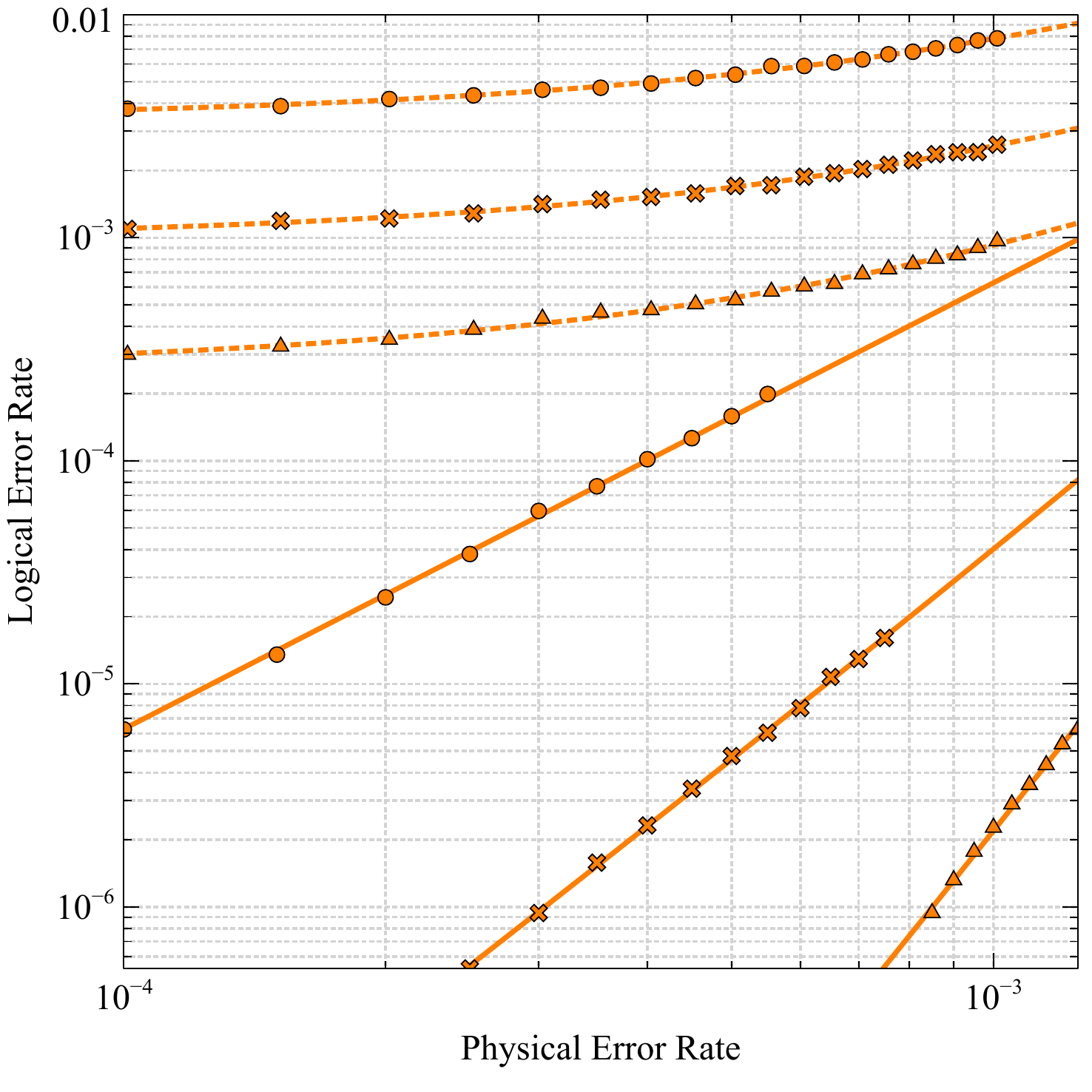}
\caption{Distance $3$,$5$, and $7$ standard surface code performance both with depolarizing leakage (dashed)  at $p_{d} = p_{r} = 100 p_{\ell}$ and without depolarizing leakage (solid) while using no leakage reduction. Spontaneous relaxation back to the computational subspace is an insufficient mechanism for leakage reduction even when leakage is orders of magnitude less likely than depolarization, and active methods are required.}
\label{No-LRU}
\end{figure}

\emph{Leakage reduction units}  or LRUs are circuit gadgets that are used to actively remove leakage \cite{mochon2004anyon, aliferis2005fault}.  They are defined by two properties:
\begin{enumerate}
    \item[$(i)$] if the quantum state is sealed, an LRU ideally acts as the identity on that state, and
    \item[$(ii)$] if the quantum state is leaked, then an LRU ideally projects the state back to the computational subspace.
\end{enumerate}

There are two popular approaches for eliminating leakage.  The first is to introduce auxiliary qubits and regularly swap between them and the initial data qubits in LRUs.  After the states are swapped, the initial data qubits are measured and reprepared in the computational space, removing leakage.  If one relaxes the assumption that the gates are sealed, then one must teleport the initial data qubits to remove leakage, see Figure~\ref{LRU}.  The frequency and placement of these LRUs will determine the code performance, which we address in Section \ref{Robustness}. 

The second approach foregoes auxiliary qubits, and instead periodically swaps the roles of data and ancilla in the code. This ensures that each physical qubit is measured in every other round \cite{mehl2015fault, brown2018comparing, suchara2015leakage}.  This technique requires no qubit overhead, and may preserve the locality of the qubit lattice.  For these reasons, it may be the more desirable approach.  While it is more complicated to analyze the effects of longer-lived leakage faults, we study this technique numerically in Section \ref{Simulations}. 

\begin{figure}[htb!]
\includegraphics[width=0.75\linewidth]{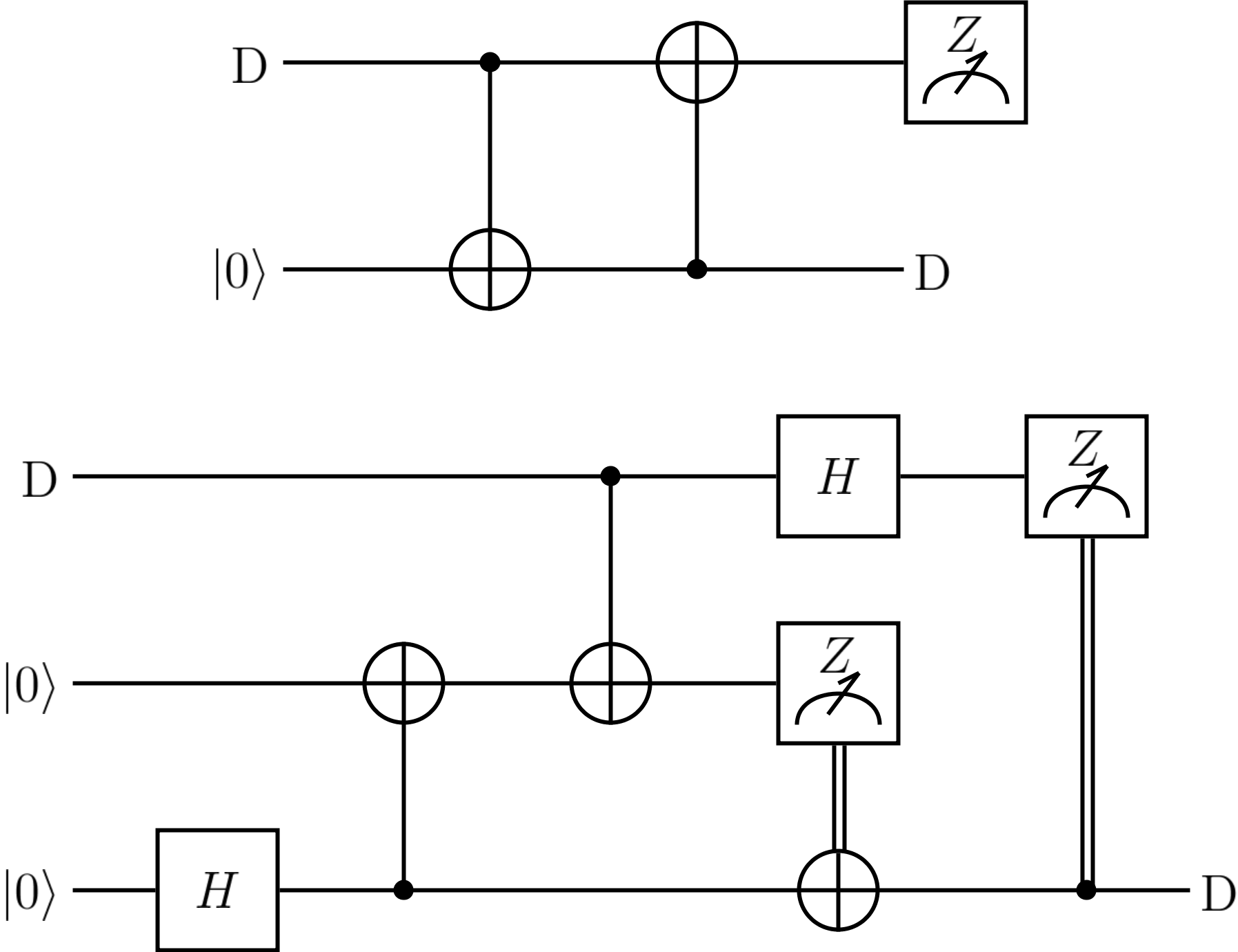}
\caption{Leakage can be removed via LRUs constructed from swapping with an auxiliary qubit (top) or teleportation (bottom).  In the ideal case, the data $D$ is teleported to the bottom wire.  Note that a single leakage cannot persist through the LRU in both the sealed-gate (top) and unsealed-gate (bottom) settings.}
\label{LRU}
\end{figure}

How frequently should we insert LRUs into our error-correction circuit?  We focus on two minimal overhead strategies.  

\begin{enumerate}
    \item[$(i)$] \emph{Syndrome extraction leakage reduction} (syndrome-$LR$) swaps each data qubit with an associated auxiliary qubit at the end of each syndrome extraction.  This allows a single leakage to persist for a single syndrome extraction cycle.  With some circuit delay, the ancillae may be used as the auxiliary qubits.
    \item[$(ii)$] \emph{Swap leakage reduction} (swap-$LR$) removes leakage by applying a swap gate between each ancilla qubit and the last data qubit it interacts with in each syndrome extraction cycle.  This allows a single leakage to persist for at most two consecutive syndrome extraction cycles; see Figure \ref{swapping}.
\end{enumerate}

\section{Leakage Robustness}\label{Robustness}
In this section, we detail damaging correlated errors due to leakage, and argue about the effective distance of different surface codes using syndrome-$LR$.  We compare subspace and subsystem surface codes, in both standard and rotated lattice geometries, in the presence of both $DP$- and $MS$-leakage using bare-ancilla extraction.  While we focus on leakage faults using syndrome-$LR$, we give numerical evidence that the additional time-correlated errors introduced by swap-$LR$ are not too damaging.

In each case, we consider planar codes with boundary encoding a single qubit \footnote{For standard geometries, we approximate the code with boundary by a code with periodic boundaries for threshold computations, but only preserve a single logical qubit.}.  Intuitively, the increased locality of the subsystem codes limits the correlated errors that can occur due to leakage.  This comes at the expense of a larger qubit lattice to achieve the same code distance. We also allow lattice geometries that are either rotated (a periodic diamond cut) or standard (i.e. un-rotated, a periodic square cut).  Rotated surface codes give a $\sqrt{2}$-fold increase in code distance \cite{Tomita:2014,bombin2007optimal} per qubit. However, a higher fraction of fault-patterns may be more damaging, and recent work has shown that standard surface codes may outperform rotated surface codes within certain sub-threshold error regimes \cite{beverland2018role}.  A similar intuition holds for leakage, as a standard lattice geometry can ensure that leakage faults do not propagate errors parallel to a logical operator, again at the cost of additional qubit overhead. 

We call a code \emph{leakage robust} if it does not experience a linear effective distance reduction in the presence of leakage.  Otherwise, we call them \emph{leakage susceptible}, in which case the effective distance is halved in surface codes. To achieve a desired $d_{\text{eff}}$ in the presence of $MS$-leakage, we find that it is asymptotically optimal to choose a rotated subsystem code, and next a standard subspace code, both of distance $d_{\text{eff}}$.  In the presence of $DP$-leakage, we find that it is asymptotically optimal to choose a standard subsystem code of distance $d_{\text{eff}}$, and next a rotated subspace code of distance $2d_{\text{eff}}$.  

In both cases, the subsystem surface codes yield better per-qubit distance protection than subspace surface codes, as summarized in Table~\ref{Overhead}.  We expect this subsystem advantage to generalize to multiple encoded qubits and persist against any surface code geometry, as rotated $[[n,k,d]]$ surface codes already saturate the $2$-$D$ topological code bound $kd^2 \leq cn$, where $c \geq 1$ for a planar square lattice architecture with closed defects \cite{bravyi2010tradeoffs,delfosse2016generalized}. 

Better per-qubit distance protection implies that subsystem surface codes outperform subspace surface codes in the $p\rightarrow 0$ limit.  Although thresholds may also be reduced by the addition of local correlated data qubit errors, these entropic effects are relatively mild \cite{chubb2018statistical}.  Consequently, at higher error rates, subspace surface codes achieve better performance due to their higher threshold even if the effective distance is reduced. After analyzing correlated fault patterns using syndrome-$LR$ in the present section, we investigate this crossover point in same-distance codes using swap-$LR$ in Section \ref{Simulations} numerically.

It is worth noting that the insertion of additional leakage reduction can restore the effective distance of any code.  For example, performing a leakage reduction step after each individual gate immediately converts leakage errors to depolarizing errors as they arise.  However, this introduces enormous circuit-volume overhead, which results in significantly reduced performance and many extra qubits \cite{suchara2015leakage}.  Additionally, unlike swap-$LR$, it will necessarily increase the required connectivity of the qubit lattice.  Thus, we focus on minimal overhead leakage reduction, but discuss additional strategies in Appendix \ref{auxiliary}.

\begin{table}[htb!]
\centering 
\begin{tabular}{ c | c | c |} 
&$DP$-Leakage & $MS$-Leakage\\ [0.5ex] 
\hline 
 Rotated Subspace & $4d^2 + O(d)$ & $4d^2 + O(d)$ \\ 

Rotated Subsystem & $6d^2 + O(d)$ & {\color{red} $1.5d^2 + O(d)$} \\ 
 Standard Subspace & $8d^2 + O(d)$ & $2d^2 + O(d)$ \\ 
 Standard Subsystem & {\color{red} $3d^2 + O(d)$} & $3d^2 + O(d)$ \\ 
\hline 
\end{tabular}
\caption{Total number of data qubits required to realize an effective distance $d$ in the presence of different leakage models.  Optimal choices are highlighted in red. Accounting for ancilla qubits depends on tradeoffs between qubit reuse and parallelization. In the fully parallelized case, the subsystem advantage persists but is reduced; see Appendix \ref{auxiliary}.  \iffalse The effective distance $d_{\text{eff}}$ that can be realized given the $2d^2 - 1$ qubits required for a distance-$d$ rotated surface code as data+ancilla qubits for the new code. \fi} 
\label{Overhead} 
\end{table}

\subsection{Subspace Surface Codes}

We first consider fault-tolerance using syndrome-$LR$ on subspace surface codes with bare-ancilla syndrome extraction. Subspace surface codes are defined on a square lattice, with qubits placed on the edges of the lattice.  To each plaquette $P$ in the bulk of the lattice, we associate a $4$-body $X$-type stabilizer, and to each vertex $V$ in the lattice, we associate a $4$-body $Z$-type stabilizer,
$$X_P := \prod\limits_{e \in P} X_e \text{ and } Z_V := \prod\limits_{e \ni V} Z_e.$$ Then the stabilizer group is generated by all plaquette and vertex operators of this form, as well as those on the boundary.  For a standard lattice these boundary stabilizers are weight-$3$, while for rotated lattices they are weight-$2$, and we refer to violated stabilizers in the lattice as excitations.  Standard codes form a family with parameters $[[2d^2 - 2d + 1,1,d]]$, while rotated codes form a family with parameters $[[d^2,1,d]]$.

In the presence of $DP$-leakage, both the standard and rotated surface codes experience a halving of their effective distance, caused in part by uncontrolled `hook' errors \cite{Dennis:2002} during syndrome extraction \cite{fowler2013coping, brown2018comparing, suchara2015leakage}.  Three such distance-damaging faults, caused by ancilla leakage, are shown in Figure~\ref{subspace}.

\begin{figure}[htb!]
\includegraphics[width=\linewidth]{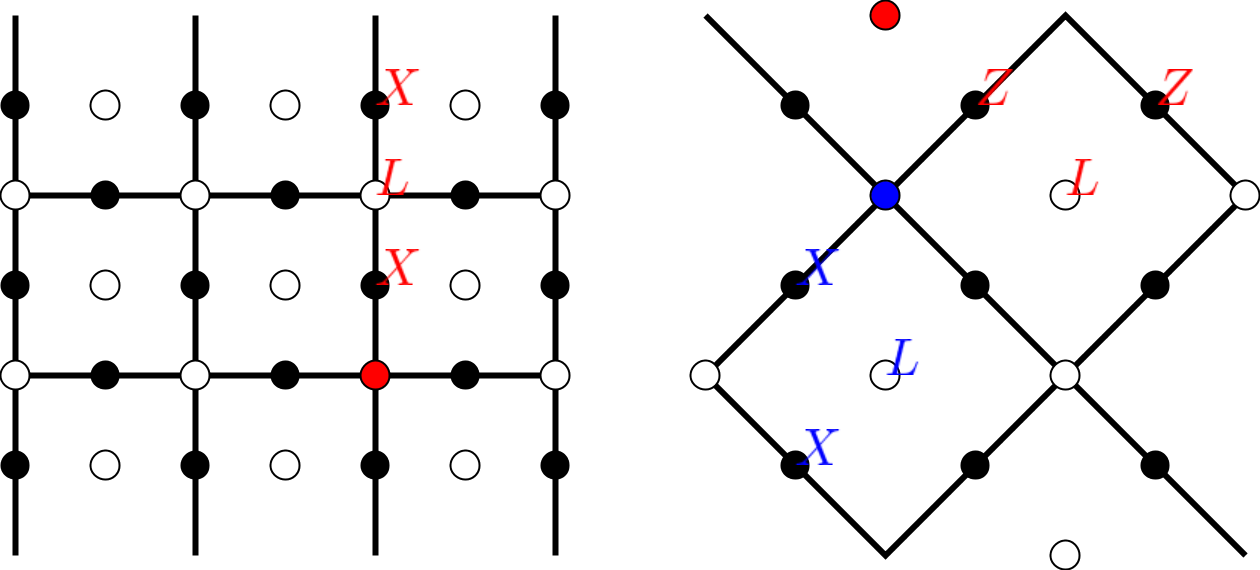}
\caption{Distance-damaging leakage faults in the distance-3 standard and rotated surface codes.  Red indicates $DP$-leakage; blue indicates $MS$-leakage, with anti-commuting excitations marked with the same color and leaked qubits measured as $+1$. The rotated surface code has north-south $X$-type boundaries, and east-west $Z$-type boundaries.}
\label{subspace}
\end{figure}

However, standard subspace codes are robust to $MS$-leakage. Unlike depolarizing errors, a single data leakage event may cause many different configurations of measurement outcomes on incident stabilizer checks before its removal.  Thus, a space-correlated error due to a data leakage may produce any one of its possible time-correlated syndrome configurations, but only over a single time-step.  Given $d$ successive syndrome measurements, this reduces the problem to considering only space-correlations in fault patterns.

In the $MS$-leakage model, data leakage is not too damaging: although it may generate many combinations of time-correlated syndrome configurations, it does not produce any new space-correlated errors.  The reason is that data $MS$-leakage cannot propagate errors to ancillae that will then propagate errors to other data qubits.  Thus, it suffices to only consider space-correlated errors due to ancilla leakage.

Ancilla $MS$-leakage may produce arbitrary configurations of errors on the support of a stabilizer that are of the same type as that stabilizer.  For the subspace surface code, every such configuration is either a single-qubit or two-qubit error, up to stabilizer equivalence. In particular, in the standard lattice, there are two different configurations of excitations (up to symmetry) caused by weight-$2$ correlated errors. Both are contained in an $L^\infty$-ball of radius one on the ancilla qubit lattice, and so any configuration of excitations caused by $<d$ faults cannot traverse the lattice. As any logical operator which does not traverse the lattice must be trivial, we may conclude that the effective distance of the code is preserved.  See Figure~\ref{good_subspace} for a summary of these worst-case leakage events.

\begin{figure}[htb!]
\includegraphics[width=0.8\linewidth]{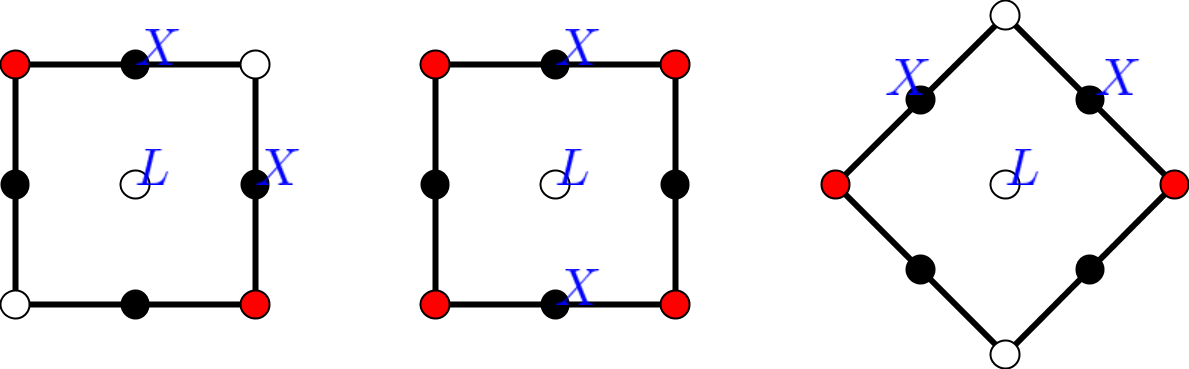}
\caption{Worst-case $MS$-leakage in subspace surface codes that does not cause an effective distance reduction, with $Z$-type excitations colored red.  Again, we assume north-south $X$-type boundaries of the rotated code.  Note that the diagonal errors in the standard lattice geometry may turn parallel to a logical operator in the rotated lattice geometry.}
\label{good_subspace}
\end{figure}

\subsection{Leakage Susceptibility of Subspace Codes}

As subspace surface codes are susceptible to $DP$-leakage, each experiences an effective distance reduction $d \mapsto \lceil\frac{d}{2}\rceil$ \cite{fowler2013coping, suchara2015leakage, brown2018comparing}.  This is damaging to their low-error suppression, and we will show that this damage begins to manifest at low but relevant sub-threshold error rates.  One might hope to construct some clever topological code and scheduling that minimizes the damage that $DP$-leakage inflicts on the effective code distance.

Unfortunately, a simple union bound shows that this damage is inevitable when restricting to minimal overhead leakage reduction in topological stabilizer \emph{subspace} codes.  However, the proof will give intuition for avoiding this damage by extending the search to topological stabilizer \emph{subsystem} codes. 

\begin{Proposition}[Union Bound]\label{main}
Let $\{C_d\}$ be a family of $D$-dimensional topological subspace stabilizer codes parametrized by diverging distance $d$ using bare-ancilla syndrome extraction and syndrome-$LR$.  Let $d_{\text{eff}}$ denote the minimum number of circuit faults that may cause a logical error in the presence of $DP$-leakage.  Then there must be a linear effective distance reduction, i.e. there is a constant $\eta<1$ such that $d_{\text{eff}} \leq \eta d.$
\end{Proposition}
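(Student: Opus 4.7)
The plan is a constructive fault-counting argument: I exhibit a fault path comprising at most $\eta d$ individual events in the $DP$-leakage error model whose combined data-qubit error equals a minimum-weight logical operator, forcing $d_{\text{eff}} \leq \eta d$ for some constant $\eta < 1$ depending only on the local geometry of the code family.

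First, fix a minimum-weight logical representative $L$ of $C_d$ with $|\text{supp}(L)| = d$, and construct an auxiliary graph $H$ with vertex set $\text{supp}(L)$ by joining two qubits whenever they both lie in the support of some common stabilizer generator. Topological locality in $D$ dimensions bounds both the weight of each generator and the number of generators incident to any single qubit, so $H$ has maximum degree $\Delta = O(1)$ independent of $d$. The commutation relation $[L,S] = 0$ for every stabilizer $S$, rephrased as the even parity of anticommuting factors on $\text{supp}(L) \cap \text{supp}(S)$, implies that every qubit of $\text{supp}(L)$ lies in a stabilizer sharing at least two qubits with $\text{supp}(L)$; hence $H$ has minimum degree at least $1$. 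A standard greedy argument on bounded-degree graphs then yields a matching $M$ of size $m = |M| \geq d/(2\Delta + 1)$.

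For each edge $(q,q') \in M$, let $S_{q,q'}$ be a witnessing stabilizer. I insert one $DP$-leakage event on the extraction ancilla of $S_{q,q'}$, timed to occur immediately before the first of its interactions with $\{q,q'\}$. By the definition of $DP$-leakage, each subsequent interaction fully depolarizes the touched data qubit, so both $q$ and $q'$ can receive exactly the Pauli factors of $L$ restricted to those sites, with identity applied to the remaining touched qubits (an allowed outcome of the depolarizing channel). Each of the remaining $d - 2m$ qubits of $\text{supp}(L)$ is then addressed by a single data-qubit depolarizing fault producing the corresponding Pauli factor of $L$. Because syndrome-$LR$ removes leakage at the end of each extraction cycle, these events can be placed across distinct syndrome rounds, and their data-error contributions combine additively to a total data error exactly equal to $L$. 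Totaling the events gives $m + (d-2m) = d - m$, so $d_{\text{eff}} \leq (1 - 1/(2\Delta+1))\, d =: \eta d$ with $\eta < 1$.

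The main obstacle is verifying the minimum-degree property of $H$ uniformly over the class of topological subspace stabilizer codes. For CSS topological codes (in particular all subspace surface-code variants discussed in the paper), this is immediate: an $X$-type (respectively $Z$-type) logical $L$ anticommutes locally with every $Z$-type (respectively $X$-type) stabilizer containing any of its qubits, and the even-parity condition then forces a second overlap. For a general (non-CSS) stabilizer code, one has to rule out the pathological case in which every stabilizer meeting a qubit $q \in \text{supp}(L)$ happens to carry the same Pauli as $L$ at $q$; if such qubits formed a constant fraction of $\text{supp}(L)$, one would use the resulting weight-reducing stabilizer multiplications to contradict minimality of $|\text{supp}(L)|$, and otherwise the matching construction restricts to the remaining $\Omega(d)$ "good" qubits and preserves the linear saving.
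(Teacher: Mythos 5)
Your proof is correct and follows essentially the same route as the paper's: both arguments fix a minimum-weight logical $L$, use topological locality to extract $\Omega(d)$ disjoint pairs of qubits in $\mathrm{supp}(L)$ that each lie in the support of a common stabilizer generator (you via a maximal matching in a bounded-degree overlap graph, the paper via packing disjoint balls of radius equal to the maximum stabilizer diameter $\xi$), and then spend one ancilla $DP$-leakage per pair to realize two Pauli factors of $L$ with a single fault, yielding $d_{\text{eff}} \leq d - \Omega(d)$. Your use of the parity of the overlap forced by $[L,S]=0$ to obtain a second intersection point is the same step the paper asserts, treated with somewhat more care in the non-CSS case.
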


\begin{proof}

By definition, each $C_d$ is defined on a $D$-dimensional lattice of qubits with a metric.  As the code family is topological, there must be some maximum diameter $\xi$ of the support of any stabilizer generator of the code, where $\xi$ is independent of $d$.

Select any weight-$d$ logical operator $$L = P_{\ell_1} P_{\ell_2} \ldots P_{\ell_d}$$ acting on qubits $\ell_1,\ldots,\ell_d$, where each $P \in \{X,Y,Z\}$.
Select a maximal set $R$ of disjoint neighborhoods $B_{\ell_i}(\xi)$ centered at qubits in the support of $L$.  Recall that $$\text{vol}(B_{\ell_i}(\xi)) = \frac{\pi^{D/2}}{\Gamma(\frac{D}{2}+1)}\xi^D =: c_D\xi^D.$$ Then there must be at least $\lfloor \frac{d}{c_D\xi^D} \rfloor$ such neighborhoods in this collection.

Using syndrome-$LR$, note that $r$ ancilla leakage events can cause any configuration of errors on the combined support of $r$ stabilizer generators.  Furthermore, for each $B_{\ell_i}(\xi) \in R$, there is some stabilizer generator $S_{\ell_i}$ such that $\{P_{\ell_i}, S_{\ell_i}\} = 0$.  As $[S_{\ell_i},L] = 0$, there must be at least one other $P_{\ell_j}$ supported in $B_{\ell_i}(\xi)$ with $\{P_{\ell_j},S_{\ell_i}\} = 0$.

In particular, each $B_{\ell_i}(\xi) \in R$ contains a stabilizer that intersects $L$ in at least two locations, $\ell_i$ and $\ell_j$.  Thus, a single ancilla leakage could produce the error $P_{\ell_i}P_{\ell_j}$, reducing the effective code distance by one. 

Now suppose we have two such stabilizer generators: $S_{\ell_i}$ corresponding to $P_{\ell_i}$, and $S_{\ell_i'}$ corresponding to $P_{\ell_i'}$.  Then $\ell_j$, as an element in the support of $S_{\ell_i}$ with diameter $\xi$, must lie in $B_{\ell_i}(\xi)$.  Similarly, $\ell_{j'}$ must lie in $B_{\ell_j'}$.  As these balls are disjoint, $\ell_j \neq \ell_{j'}$.

Thus, each ball in $R$ corresponds to an ancilla leakage which reduces the effective distance of the code by at least one, and these reductions combine so that $d_{\text{eff}} \leq d - |R|$.  As $|R| \geq \frac{d}{c_D \xi^D}$, we may conclude that $d_{\text{eff}} \leq \eta d$, where $\eta$ is given by $1 - \frac{1}{c_D \xi^D}$.
\end{proof}

Practically speaking, for most popular topological codes, the effective distance is halved.  Because $DP$-leakage is so damaging, one might expect that \emph{any} code family, when restricted to minimal overhead leakage elimination, would incur a linear distance reduction. However this is not the case: if one relaxes the practical restriction of a topological generating set, then we can manage $d_{\text{eff}} = d - 1$ in the surface code by overlapping measurement supports; see Appendix \ref{topcounterexample}.
A more plausible solution is to use subsystem codes, in which we can measure operators that anticommute with (dressed) logical operators. Such codes are natural to consider, as their increased locality may require less relative overhead to be robust in the presence of leakage.  In fact, we will see that subsystem surface codes provide a subsystem counterexample to Proposition \ref{main}.

\subsection{Subsystem Surface Codes}

\subsubsection{Constructions}

In this section, we construct subsystem surface codes from the perspective of ensuring robustness in the presence of leakage.  On a standard lattice, these codes are equivalent to those introduced in \cite{bravyi2013subsystem}, and have weight-$6$ stabilizers that can be expressed as the product of weight-$3$ gauge operators in the bulk.  We extend these codes to a rotated lattice, which have the same-weight bulk stabilizers and gauge operators.  Opposite the subspace surface codes, the standard lattice has weight-$2$ boundary operators while the rotated lattice has weight-$3$ boundary operators.  

Begin with the square lattice defining the surface code and insert a data qubit into the center of each plaquette.  This triangulates the square lattice on which the surface code was initially defined, doubling the distance of the code with respect to $Z$-type errors. Furthermore, measuring these newly formed triangular $X$-type stabilizers in the presence of leakage groups the original data qubits into `hooks', in which leakage can only recreate the hook errors defined in \cite{Dennis:2002}.  Unfortunately, this asymmetry between $X$ and $Z$ produces higher-weight, problematic hexagonal $Z$-type stabilizers.  Measuring these larger $Z$-type stabilizers directly will damage the code more than measuring the stabilizers defined on the original square lattice.  

The simple fix is to symmetrize the $X$- and $Z$-type operators: make both the $X$-plaquettes and the $Z$-plaquettes hexagonal.  As a result, we have $(d-1)^2$ gauge degrees of freedom that we may use to measure each hexagonal stabilizer as a product of triangular gauge operators.  Intuitively, as this groups the original data qubits into hooks for both $X$- and $Z$-type measurements, it should preserve the distance of the code in the presence of $DP$-leakage.  

The price for this locality (as with many subsystem codes) is more qubits and higher-weight stabilizers, which in turn yield higher logical error rates and lower thresholds.  This can be realized by relating the code capacity threshold to a phase transition of the random-bond Ising model on the honeycomb lattice along the Nishimori line \cite{nishimori1986geometry, bravyi2013subsystem}.  The resulting threshold estimate yields $p \approx 7\%$ \cite{de2006multicritical}, compared to the surface code threshold estimate on a square lattice of $p \approx 11\%$ \cite{Honecker:2000}. 

See Figure \ref{subsystem_code} for a pictoral description of these codes with boundary, which form a subsystem code family with parameters $[[3d^2-2d, 1, d]]$ and $(d-1)^2$ gauge degrees of freedom.
\begin{figure}[htb!]
\includegraphics[width=0.95\linewidth]{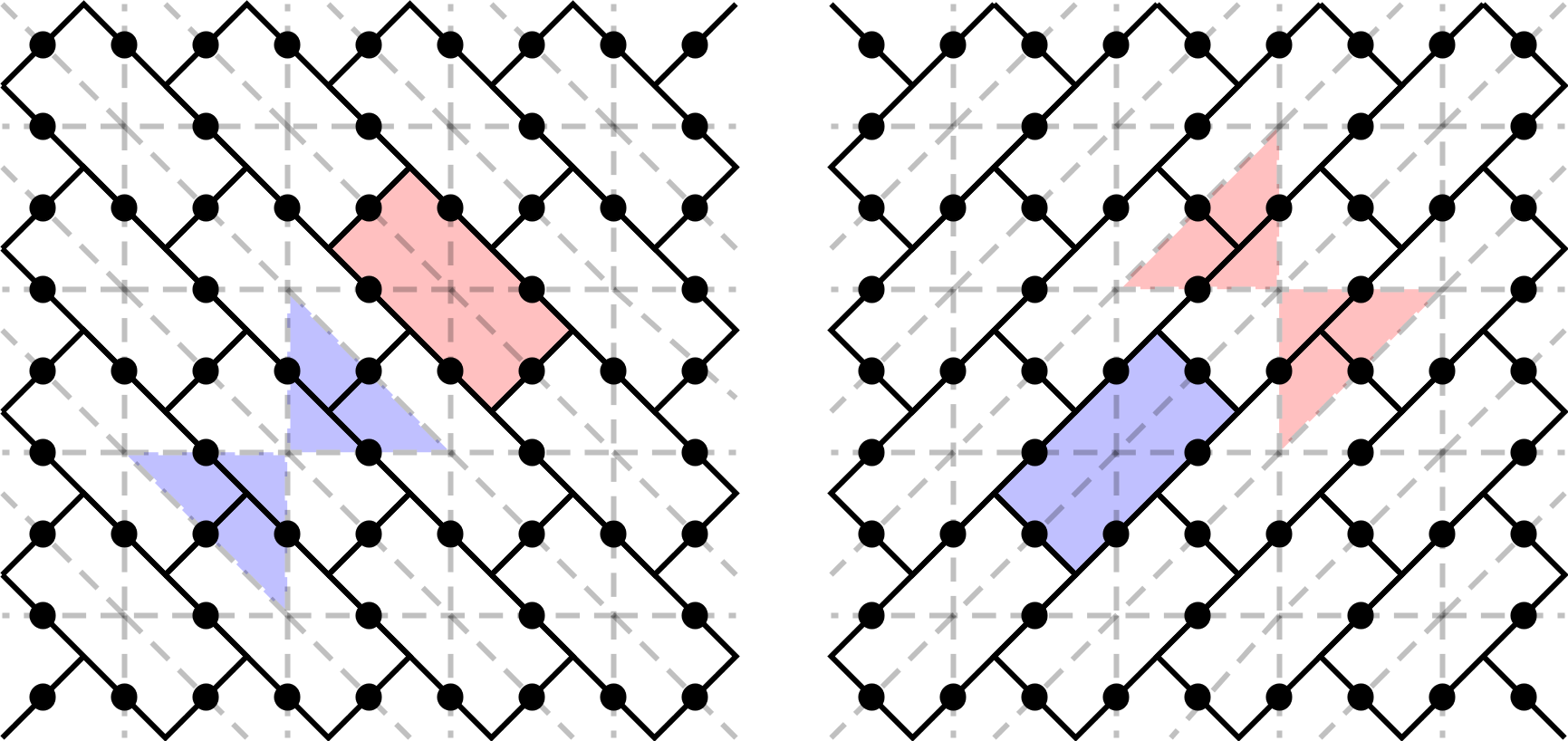}
\caption{A distance-$5$ subsystem surface code on a standard lattice.  The $X$-type stabilizers are defined by the hexagonal plaquettes on the left, and similarly for $Z$-type stabilizers on the right.  The dotted lines form the dual lattices, which represent gauge operators of opposite type.  In particular, the red $X$-type stabilizer may be realized as the product of the two red $X$-type gauges, and similarly for the $Z$-type operators in blue.  Boundaries can be assigned as weight-$2$ operators of the same type along opposite sides of the lattice.}
\label{subsystem_code}
\end{figure}
These codes inherit several nice properties from the subspace surface codes, including defect-based logical encoding, similar transversal gates, and efficient minimum-weight perfect matching decoding.  Unfortunately, these codes also have significant qubit overhead per distance.   For example, the smallest error-correcting code in the family forms a $[[21,1,3]]$ code.  

Fortunately, analogous to the surface code, we can rotate the lattice in order to reduce this overhead.  However, unlike the rotated surface code, the boundaries are fixed by the anisotropic orientation of the stabilizers.  This subsystem code family has parameters $[[\frac{3}{2}d^2 - d + \frac{1}{2}, 1, d]]$ with $\frac{(d-1)^2}{2}$ gauge degrees of freedom.  In particular, the smallest error-correcting code in this family forms an $[[11,1,3]]$ code with at most weight-$3$ check measurements, see Figure \ref{subsystem_small}.

\begin{figure}[htb!]
\includegraphics[width=0.9\linewidth]{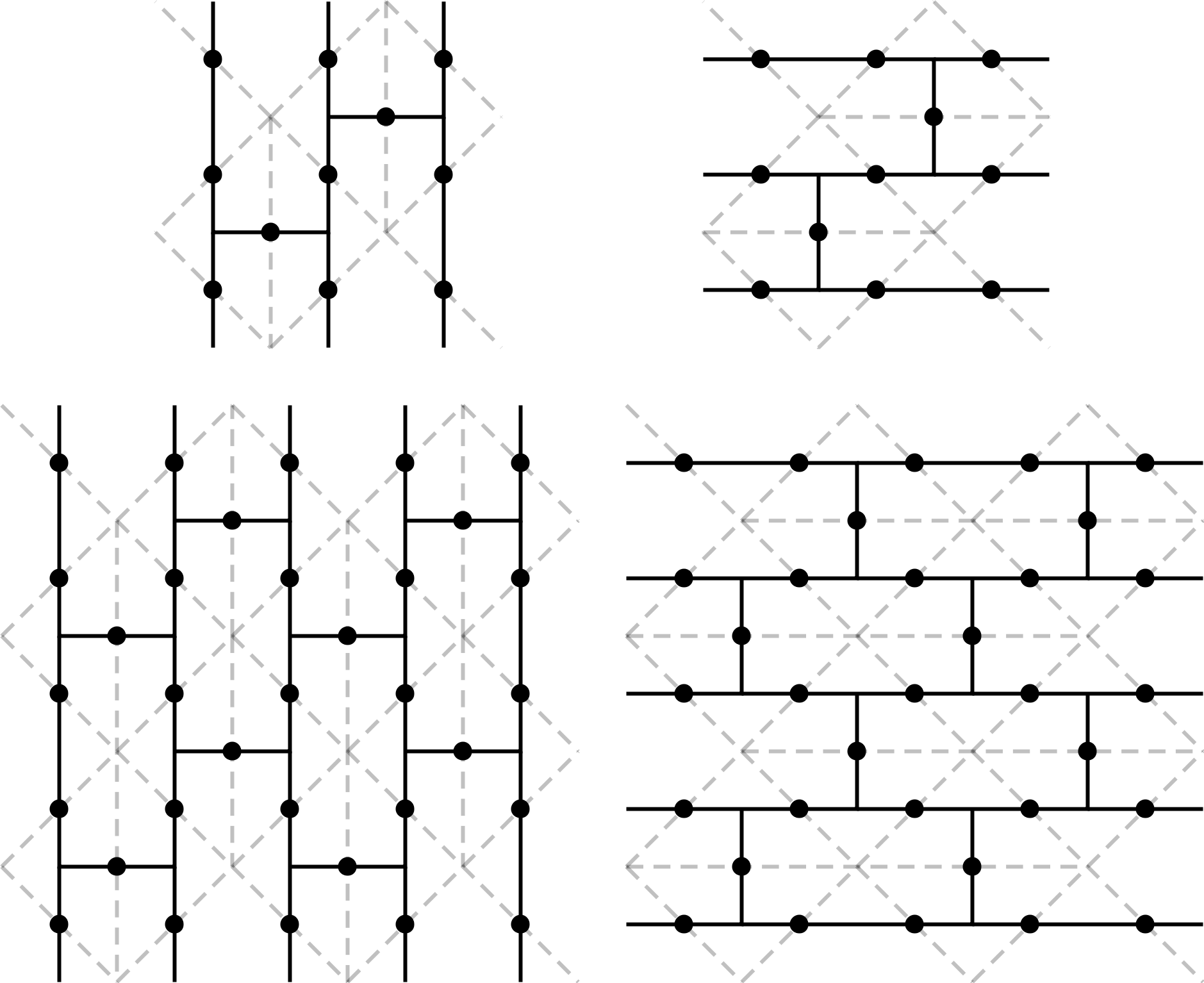}
\caption{Distance-$3$ (top) and distance-$5$ (bottom) subsystem surface codes on a rotated lattice.  The plaquettes on the left-side represent $X$-type stabilizers, while plaquettes on the right represent $Z$-type stabilizers.  Note that $X_L$ spans the lattice from north to south, while $Z_L$ spans the lattice from east to west.  Again, the dual lattices representing gauge operators of opposite type are outlined by the dotted lines.}
\label{subsystem_small}
\end{figure}

\subsubsection{Ancilla Leakage}
Having defined the subsystem surface code families, we turn to analyzing their correlated errors in the presence of leakage.  Time-correlated syndrome configurations may be treated as before, and so we may again restrict our attention to space-correlated errors.  We consider ancilla leakage and data leakage separately.

Ancilla leakage is much simpler to handle in this code, as the gauge generators are all weight-$3$.  Select any triangular $X$-gauge generator; by symmetry, the same analysis will apply to all other generators.  Then, up to gauge transformation, any error configuration of $X$-type will produce an effective weight one error on the data.  As data leakage in the $MS$-model also causes no new space-correlated error configurations, we may immediately conclude that these codes are robust to $MS$-leakage.  

Thus, we may focus solely on $DP$-leakage. In particular, only $Z$-type error configurations occurring on $X$-type gauge operators (and vice versa) may produce higher-weight correlated errors. Of the three possible weight-$2$ configurations of $Z$-errors on an $X$-gauge operator, two are equivalent to weight one $Z$-errors.  Thus, we need only consider two new errors in the presence of $DP$-leaked ancillae: the remaining weight-$2$ $Z$-error, and the weight-$3$ $Z$-error that acts on the entire triangular $X$-gauge operator; see Figure \ref{bad_configurations}.

\begin{figure}[htb!]
\includegraphics[width=\linewidth]{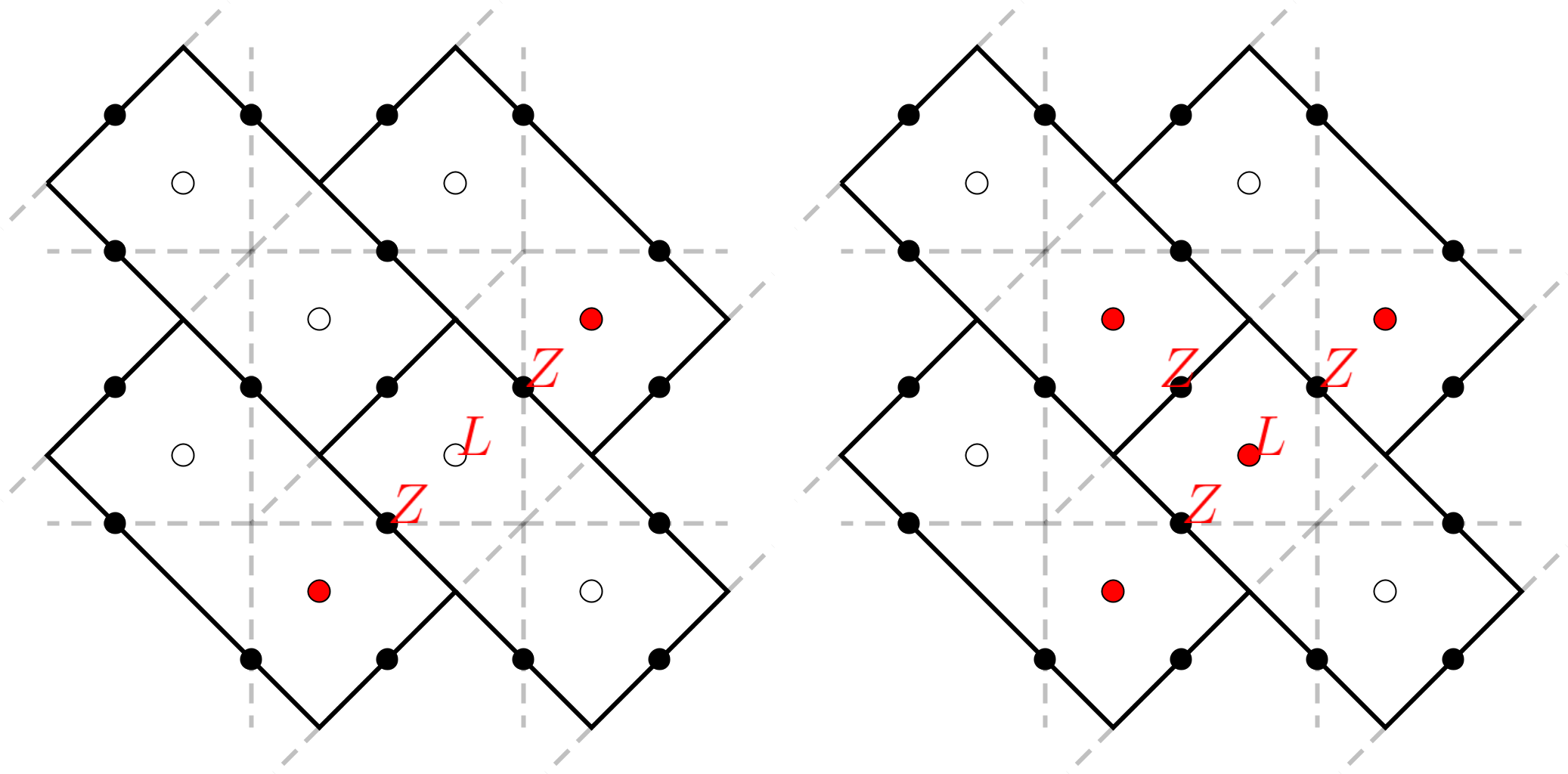}
\caption{The two new correlated $Z$-errors due to ancilla $DP$-leakage.  The dotted lines represent the gauge operators formed from the dual lattice of the $Z$-type stabilizers.  Red dots indicate violated gauge measurements.  A similar analysis applies symmetrically to every other triangular gauge operator.}
\label{bad_configurations}
\end{figure}

Note that the excitations formed from both errors in Figure \ref{bad_configurations} are contained in an $L^\infty$-ball of radius one in the standard lattice, but not in the rotated lattice.  Thus, neither of these errors may reduce the effective distance of the subsystem surface code in the standard lattice, while the first will halve the effective distance in the rotated lattice. 

Thus, it only remains to consider data leakage.  However, data leakage in the $DP$-model may cause error propagation between different data qubits.  Furthermore, this propagation will depend on the particular gate scheduling we choose.  Consequently, we relegate an accounting of data leakage and gate schedulings to Appendices \ref{myleakages} and \ref{scheduling}, respectively.  With proper gate timings, data leakage does not cause an effective distance reduction in these codes.

\subsection{Swap-$LR$}

One final type of leakage reduction proposed for the surface code in several works \cite{suchara2015leakage,ghosh2015leakage,mehl2015fault,fowler2013coping,brown2018comparing} is foregoing auxiliary qubits by regularly swapping the roles of data and ancilla.  The idea is to continually measure and reinitialize all qubits in the lattice to ensure that leakage can persist for no more than two rounds of syndrome extraction.  The mechanism for doing so is by applying SWAP gates between data and ancilla every round, see Figure \ref{swapping}.

\begin{figure}[htb!]
\includegraphics[width=0.85\linewidth]{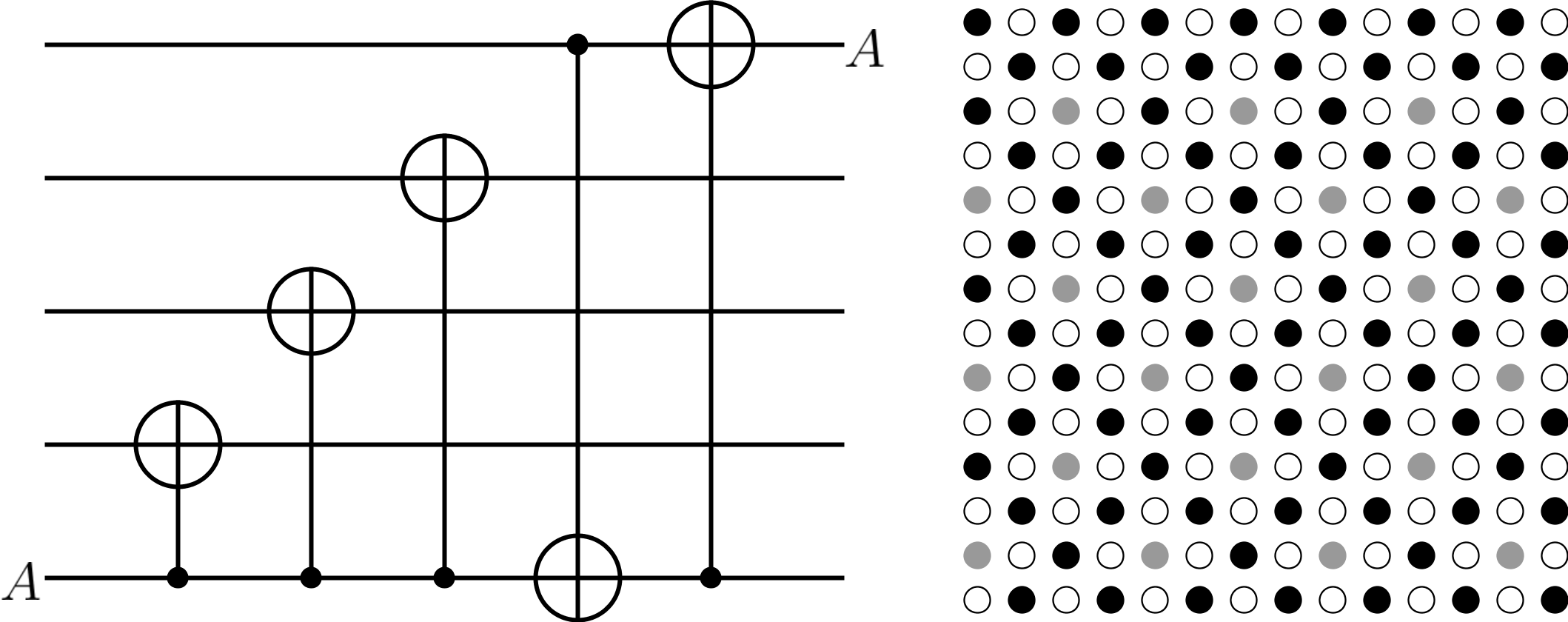}
\caption{The swap-$LR$ scheme switches the role of data and ancilla, with a subspace code scheduling pictured on the left.  One of the three CNOT gates forming the SWAP gate cancels with the final CNOT gate of syndrome extraction, leaving a single extra CNOT gate and no qubit overhead.  On the right, the bulk of the subspace surface code can be partitioned into code qubits (shaded black or grey) and ancilla qubits (unshaded) that switch roles.  Removing the shaded grey qubits yields the bulk of the subsystem surface code.}
\label{swapping}
\end{figure}

This has two competing effects.  On the one hand, it minimizes circuit overhead during syndrome extraction, yielding fewer potential fault locations.  On the other hand, it allows leakage to persist for longer, resulting in more correlated errors.  In the case of $DP$-leakage, the space-correlated errors it produces are also different: an ancilla leakage upon preparation depolarizes the support of its measurement, and then acts as a data leakage for one more round.  These new space-correlated errors can again be handled by careful gate scheduling, at the cost of introducing additional weight-$2$ errors that do not damage the effective code distance; see Appendices \ref{myleakages} and \ref{scheduling}.  

In the case of $MS$-leakage, data leakage can propagate no further space-correlated errors. Thus, in the presence of $MS$-leakage and without using a different gate scheduling, we expect swap-$LR$ to scale comparably to syndrome-$LR$.  The similarities in behavior were also reported in \cite{suchara2015leakage}.

This is precisely what we observe numerically by comparing the behaviors of $DP$-leakage and $MS$-leakage in a standard subspace surface code using swap-$LR$.  It appears that the longer-lived leakage errors are not much more damaging, as the full-distance scaling of the code is approximately preserved in the presence of $MS$-leakage within the error regime we consider, see Figure \ref{swap_numbers}.

\begin{figure}[htb!]
\centering
\begin{subfigure}[b]{0.5\textwidth}
   \includegraphics[width=0.8\linewidth]{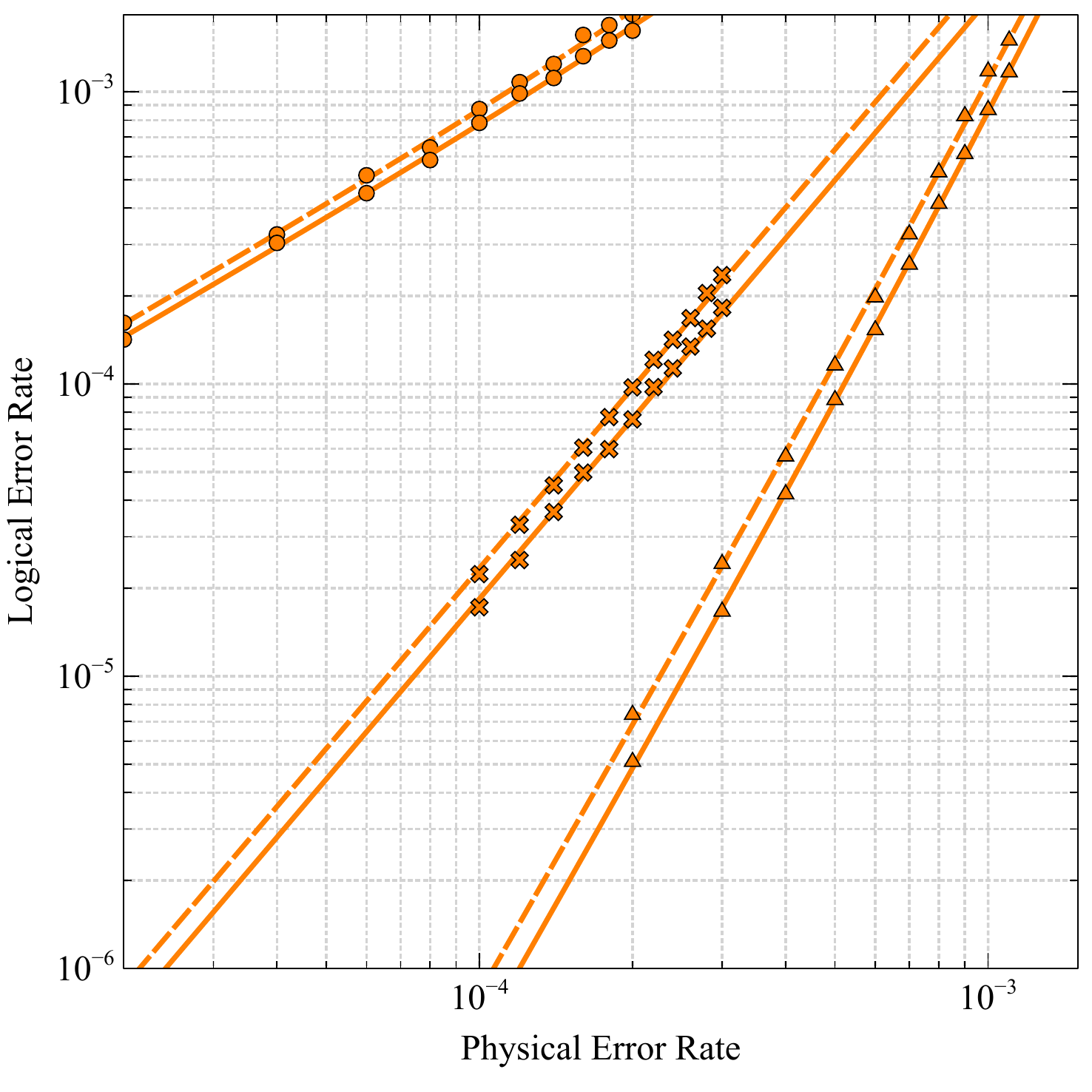}
   \caption{$DP$-Leakage}
   \label{fig:Ng1321} 
\end{subfigure}

\begin{subfigure}[b]{0.5\textwidth}
   \includegraphics[width=0.8\linewidth]{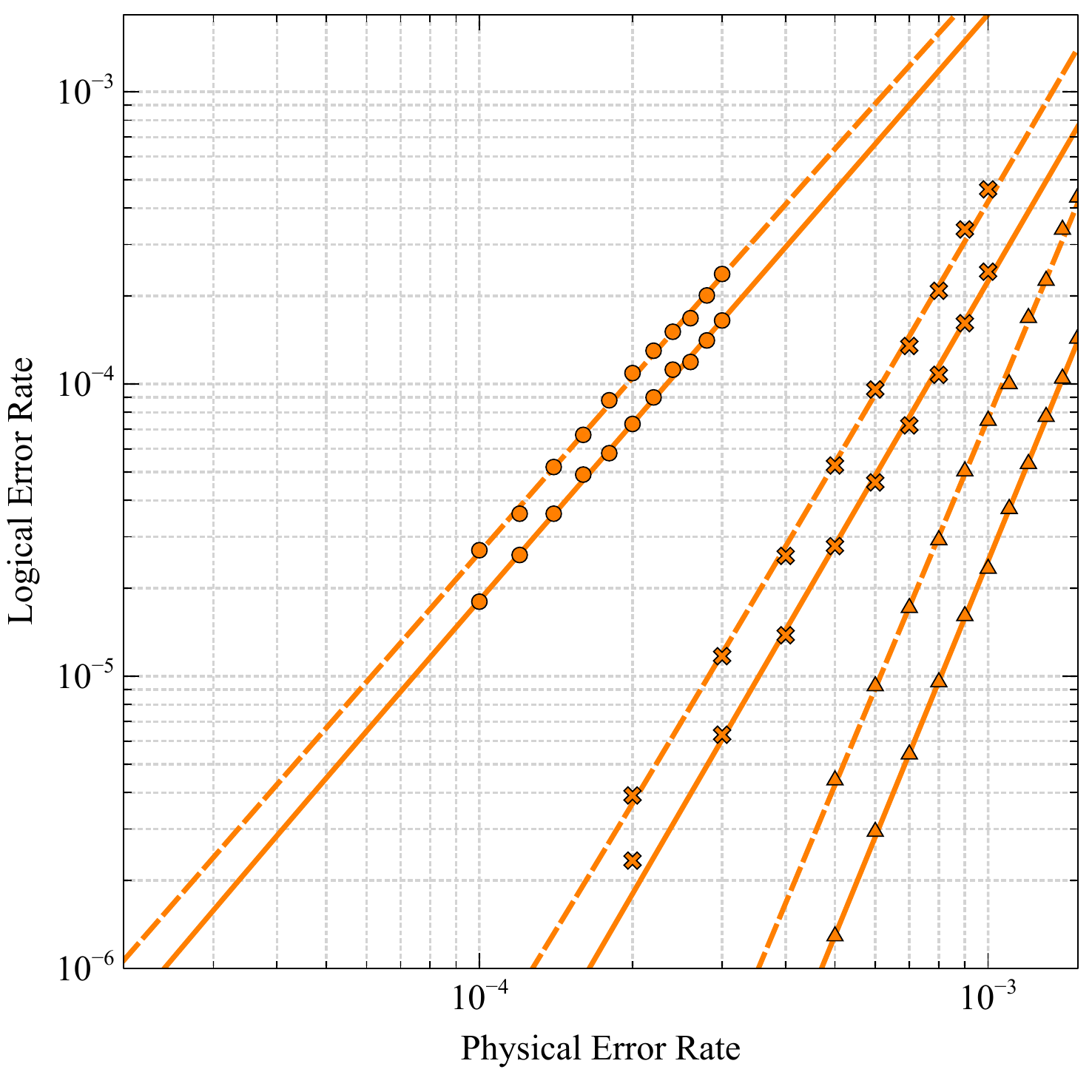}
   \caption{$MS$-Leakage}
   \label{fig:Ng2125}
\end{subfigure}
\caption{A comparison of $DP$-leakage (top) and $MS$-leakage (bottom) logical error rates for the standard surface code using both syndrome-$LR$ (dotted) and swap-$LR$ (solid) at distances $3,5$ and $7$.  As expected, we observe nearly identical scaling, and swap-$LR$ even tends to perform better. Longer-lived leakage errors do not appear to be much more damaging in this regime, as the $MS$-leakage logical error rates are correctly suppressed. However, as these are subspace codes, $DP$-leakage reduces the effective distance.}
\label{swap_numbers}
\end{figure}

\section{Leakage Simulations}\label{Simulations}

In the last section, we established that subsystem surface codes require less relative overhead than the subspace surface codes to realize their full effective code distance in the presence of leakage.  Next, we directly compare the leakage performance of subspace and subsystem surface codes in the low-error regime.  To do so, we perform Monte Carlo simulations of each code in the gate error model.  Although the simplicity of syndrome-$LR$ makes it straightforward to analyze, it does not minimize circuit-volume overhead.  Consequently, we consider the least expensive leakage reduction strategy: swap-$LR$, which requires no qubit overhead, and importantly, may preserve the locality requirements of the qubit lattice.

We focus on error rates in the $10^{-5}$ to $10^{-3}$ range.  Leakage rates spanning this range have been reported for several different qubit architectures \cite{ghosh2015leakage, cerfontaine2016feedback, cerfontaine2019feedback}.  However, the particular choice to center around an error-rate of $10^{-4}$ is motivated by spontaneous Raman scattering rates in ion traps.  It has been estimated that a $200-500$ $\mu s$ gate experiences a spontaneous scattering event with probability $10^{-4} - 10^{-3}$ \cite{yukai2018leakage, brown2018comparing}, and each spontaneous scattering event can populate any state with approximately equal probability.  For qubits based on clock transitions, this splits between the two computational states, as well as two leakage states formed by Zeeman splitting.  We then approximate that a leakage event occurs with probability $\approx 5.0 \times 10^{-5}$ \cite{brown2018comparing}.

\subsection{Surface Codes}
In the depolarizing error model, the extra overhead incurred by introducing additional gauge degrees of freedom degrades the performance of the subsystem surface code.  Even well below threshold, this manifests as higher logical error rates; see Figure \ref{Pauli_errors}.

\begin{figure}[htb!]
\centering
\begin{subfigure}[b]{0.5\textwidth}
   \includegraphics[width=0.8\linewidth]{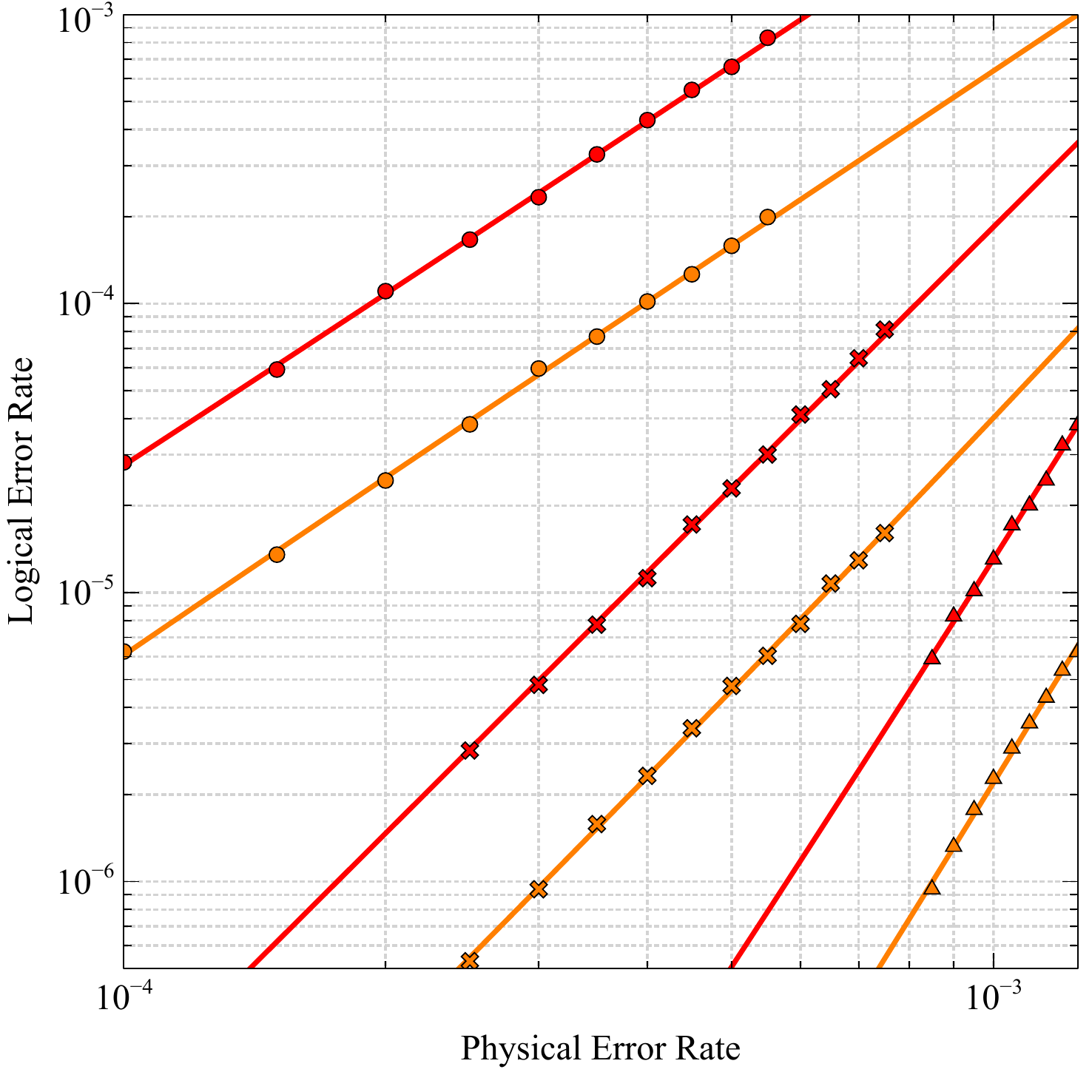}
   \caption{Standard lattice}
   \label{fig:Ng16341} 
\end{subfigure}

\begin{subfigure}[b]{0.5\textwidth}
   \includegraphics[width=0.8\linewidth]{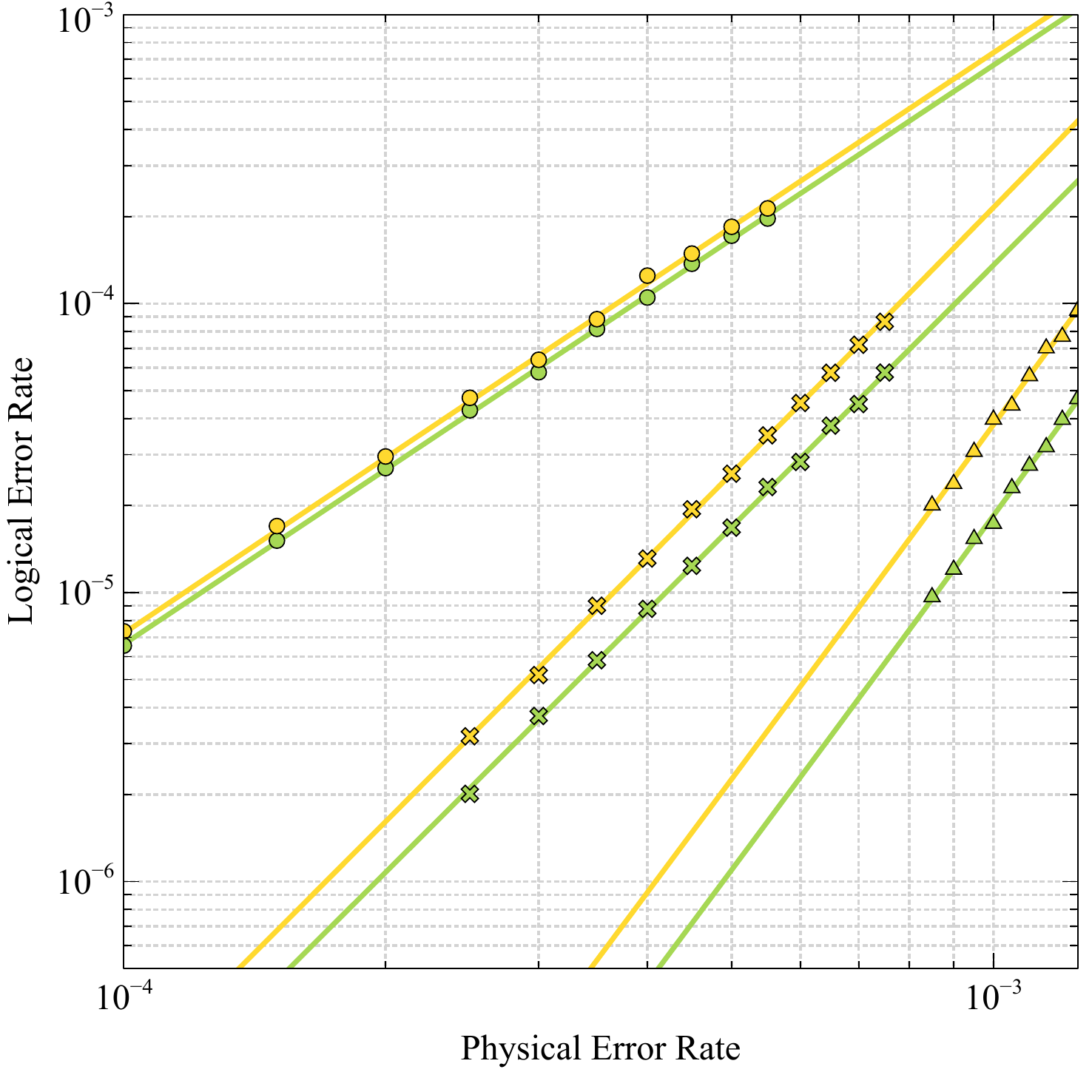}
   \caption{Rotated lattice}
   \label{fig:Ng2651}
\end{subfigure}
\caption{Sub-threshold error rates at distances $3,5,$ and $7$ in the presence of depolarizing noise in both a standard (top) and rotated (bottom) lattice geometry.  As expected, subspace surface codes (green, orange) consistently outperform subsystem surface codes (yellow, red), in some cases by nearly an order of magnitude within the error regime we consider.}
\label{Pauli_errors}
\end{figure}

However, in the presence of leakage, the locality of the subsystem codes allow them to outperform their subspace counterparts at sufficiently low error rates. In order to probe this low-error regime, we restrict ourselves to low-distance codes; see Figure \ref{main:sim}.  We observe good agreement with the expected performance of each code.  Each fit takes the form $p_L \sim p^{d_\text{emp}}$, where $d_{\text{emp}}$ is chosen to minimize the $\chi^2$-distance.  For distances $3$ and $5$, we observe that $d_e - 0.25 < d_{emp} < d_e + 0.5$, where $d_e$ is the expected scaling based on syndrome-$LR$ (see Figure \ref{Overheads}).  For distance $7$, the samples were drawn predominantly from error rates $> 3\times 10^{-4}$.  At higher rates, the distance-$4$ suppression of the Pauli errors are a significant factor when compared to the distance-$2$ suppression of leakage errors. Nonetheless, we observe the expected pronounced distance reduction, with $2< d_{\text{emp}}<3$ for those codes susceptible to leakage, and $d_{\text{emp}} > 4$ for those robust to leakage.  

Most importantly, for all codes that are leakage robust, we observe that $d_{\text{emp}}>d_{e} - 0.25$, giving evidence that the longer-lived leakage errors introduced during swap-$LR$ are significantly less damaging than the space-correlated errors shared with syndrome-$LR$.  Furthermore, we expect that error rates would improve to $d_{\text{emp}}\approx d_{e}$ in this error regime given a decoder that could better handle long-range correlations, such as those based on renormalization groups \cite{duclos2010fast,duclos2010renormalization}.  It is likely that these results would smooth out at higher distances, as off-by-one errors have a significant effect on the low distance scaling.

\begin{figure}[htb!]
\centering
\begin{subfigure}[b]{0.5\textwidth}
   \includegraphics[width=0.8\linewidth]{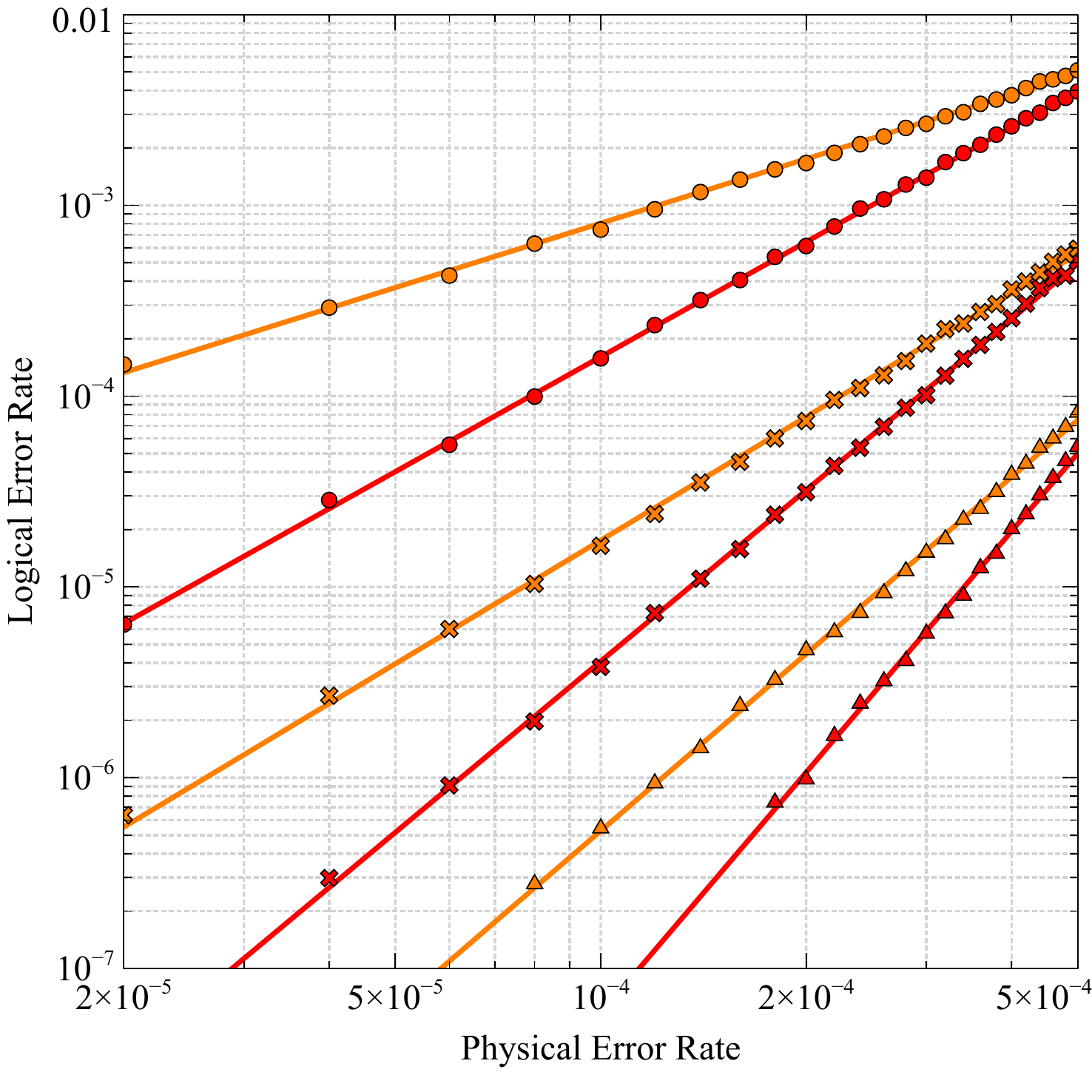}
   \caption{$DP$-leakage}
   \label{fig:Ng163dgs41} 
\end{subfigure}

\begin{subfigure}[b]{0.5\textwidth}
   \includegraphics[width=0.8\linewidth]{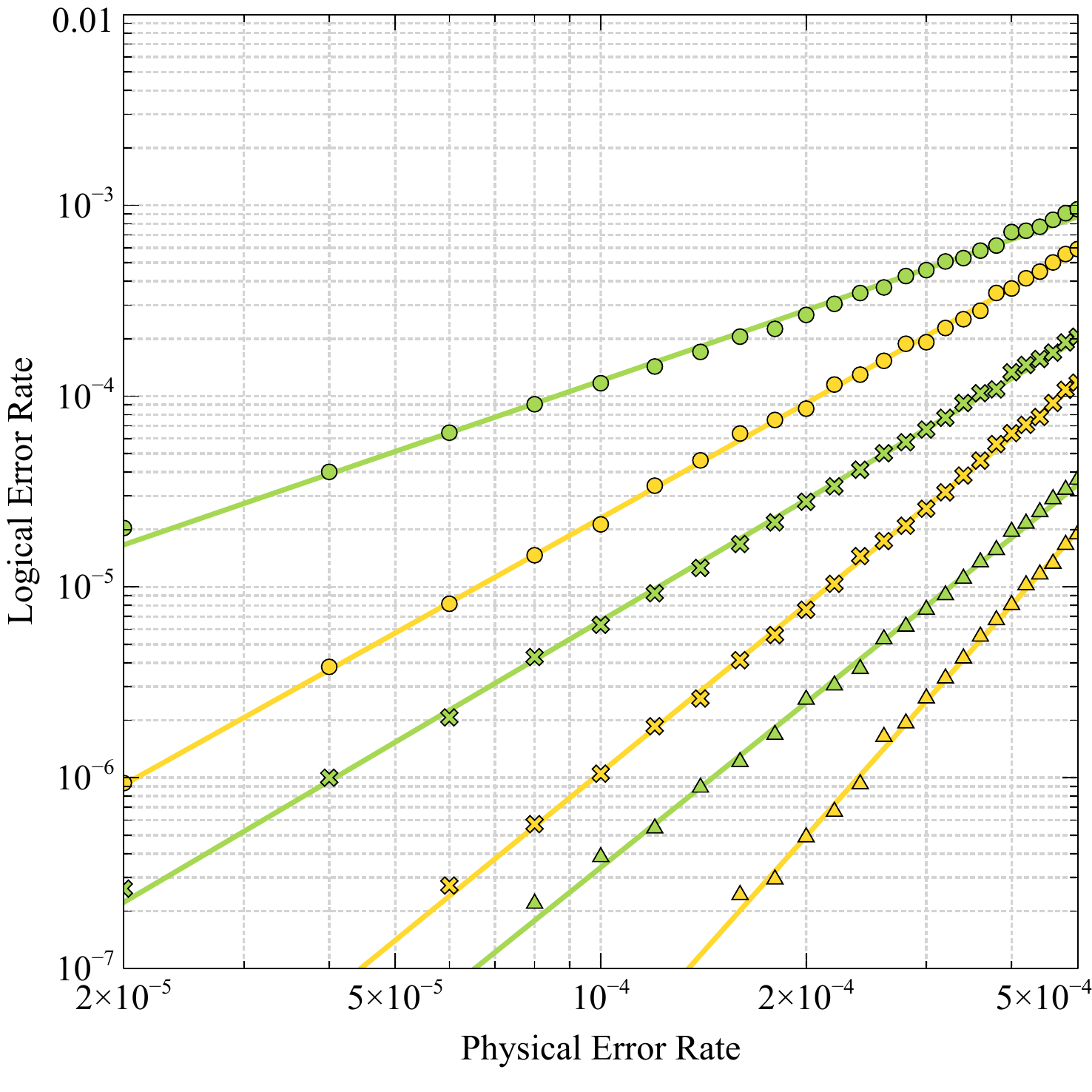}
   \caption{$MS$-leakage}
   \label{fig:Ng26hdhd51}
\end{subfigure}
\caption{Logical error rates at distances $3,5,$ and $7$ in the presence of leakage.  Comparisons are between, in ascending order of qubit overhead, the rotated subspace (green), rotated subsystem (yellow), standard subspace (orange), and standard subsystem (red) surface codes using swap-$LR$.  Each data point was recorded after at least 200 failures, with the longest simulations requiring $\approx 10^9$ trials.} \label{main:sim}
\end{figure}

This establishes that certain subsystem codes relatively outperform subspace codes (at low distance), in the sense that same-distance, same-geometry subsystem codes yield better performance around $p \lessapprox 2.0 \times 10^{-3}$ for $MS$-leakage in a rotated geometry and $p \lessapprox 0.75 \times 10^{-3}$ for $DP$-leakage in a standard geometry.  This contrasts with a local depolarizing model, in which subsystem codes do not relatively outperform surface codes in any error regime.

It is vitally important to note that this does not account for the $1.75\times$ qubit overhead required for subsystem codes of the same geometry as subspace codes.  So while the arguments of Section \ref{Robustness} demonstrate that subsystem codes offer better per-qubit distance protection in the presence of leakage, this advantage manifests at much lower error rates.  Although we are unable to probe these error regimes directly, we can give coarse upper bounds using heuristic estimates based on thresholds, see Appendix \ref{threshold_estimates}.  We estimate that, at the very least, $p$ must be $\lessapprox 2.5 \times 10^{-4}$ in the presence of $DP$-leakage and $ \lessapprox 0.32 \times 10^{-4}$ in the presence of $MS$-leakage to potentially see a per-qubit benefit at sufficiently high distance.

\subsection{Bacon-Shor Codes}

We next consider Bacon-Shor codes for use in the $MS$-leakage model.  
Bacon-Shor codes are subsystem codes defined on a lattice of $d \times d$ data qubits.  They have $2(d-1)$ stabilizer generators, with $X$- and $Z$-type generators indexed by $1 \leq k \leq d-1$ and defined by $$X_k := \prod\limits_{i=1}^d X_{i,k}X_{i,k+1} \text{ and } Z_k := \prod\limits_{j=1}^d Z_{k,j}Z_{k+1,j}$$ where $P_{i,j}$ represents the operator $P$ acting on the qubit in the $i$th row and $j$th column of the lattice.  Then the stabilizer group is generated by the set of all $X_k,Z_k$.  As subsystem codes, a generating set for the gauge operators is given by, $$X_{i,j} := X_{i,j}X_{i,j+1} \text{ and } Z_{i,j} := Z_{i,j}Z_{i+1,j},$$ where addition is performed modulo the lattice size $d$.  With this orientation, $X_L$ consists of $X$ operators spanning the north-south boundaries of the lattice, while $Z_L$ consists of $Z$ operators spanning the east-west boundaries of the lattice.  In particular, Bacon-Shor codes share the same efficient data qubit scaling as rotated surface codes, forming a family of $[[d^2,1,d]]$ codes.

Although there is additional ancilla overhead, this gives us two code families to compare more closely, namely the Bacon-Shor and rotated subspace surface codes in an $MS$-leakage model and with bare-ancilla extraction.  The former sacrifices syndrome information to perform localized checks, with weight two gauge operators ensuring that every space-correlated error due to $MS$-leakage has weight one.  The latter sacrifices locality for additional syndrome information, increasing the number of high-weight correctable errors while introducing other damaging correlated errors in the process.  The result of this tradeoff is that, at low distances, Bacon-Shor codes yield a per-qubit error-corrective advantage in the presence of $MS$-leakage at reasonable error rates, ranging from $10^{-4} - 10^{-3}$; see Figure \ref{BaconShorData}.

\begin{figure}[htb!]
\centering
\begin{subfigure}[b]{0.5\textwidth}
   \includegraphics[width=0.8\linewidth]{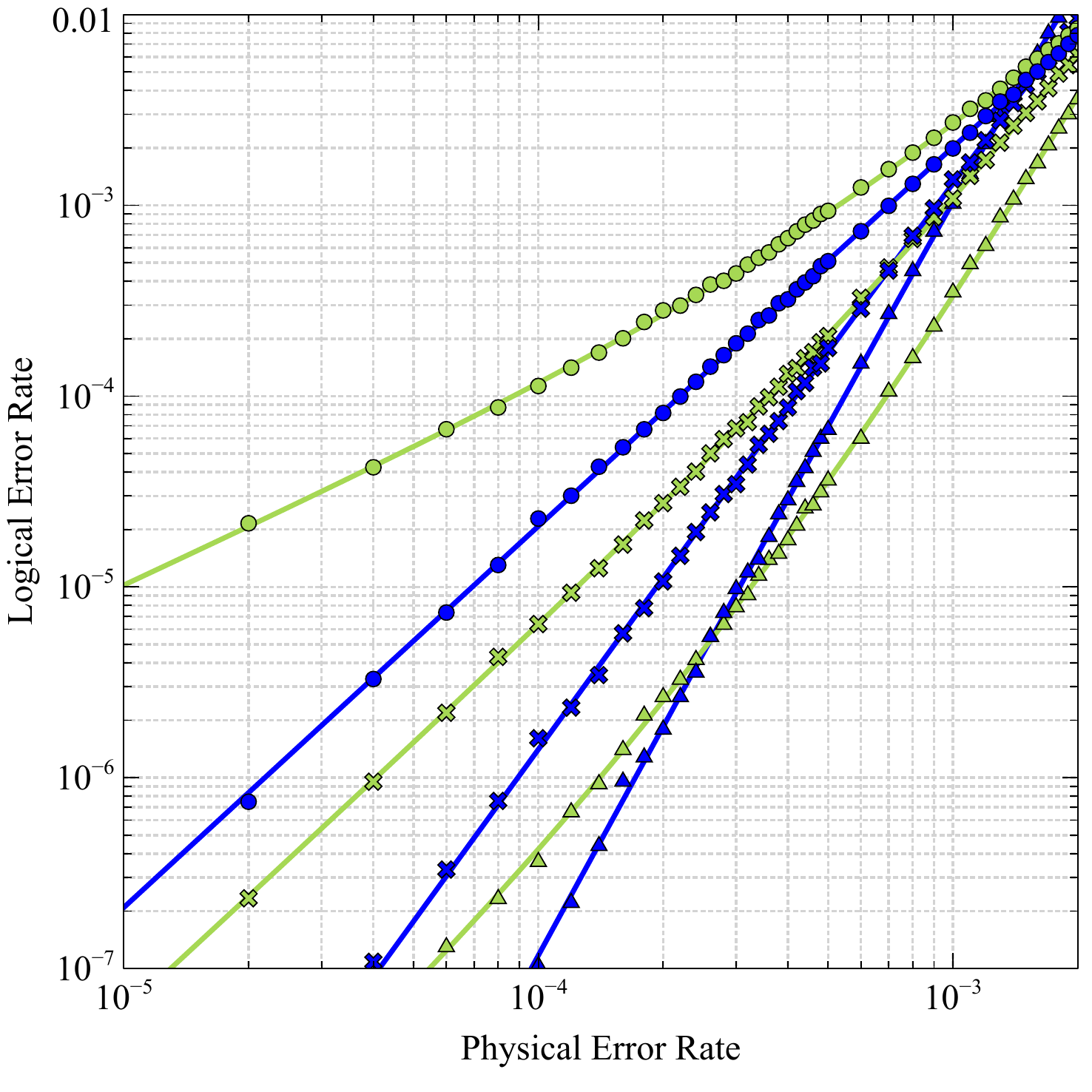}
   \caption{$p_\ell = p_d$}
   \label{fig:Ng1634sdgsdg1} 
\end{subfigure}

\begin{subfigure}[b]{0.5\textwidth}
   \includegraphics[width=0.8\linewidth]{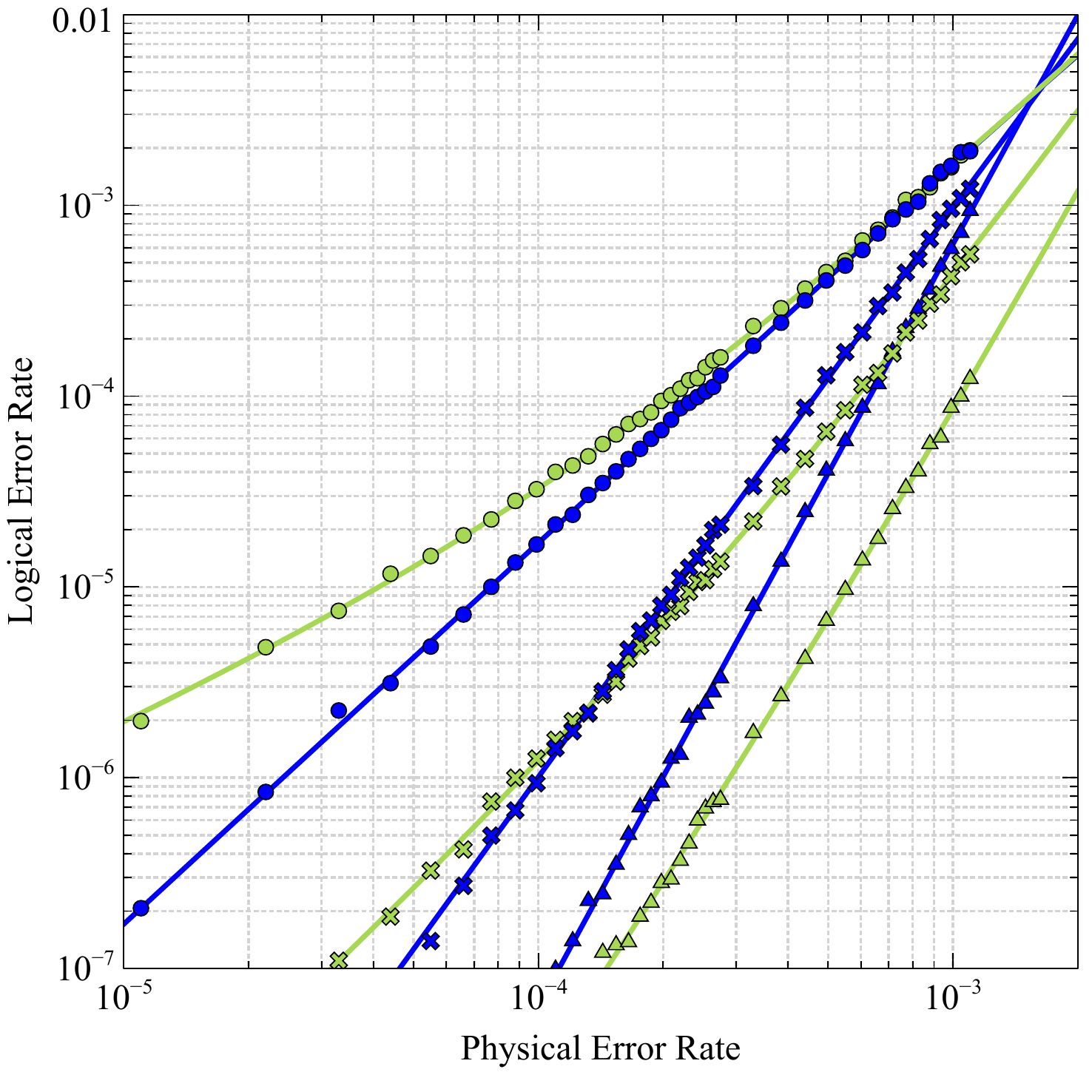}
   \caption{$p_\ell = 0.1p_d$}
   \label{fig:Ng265dhgdsh1}
\end{subfigure}
\caption{A comparison of distance $3,5,$ and $7$ rotated subspace surface codes (green) with Bacon-Shor codes (blue).  Here we consider both $p_{\ell} = p_d$ (top) as well as $p_{\ell} = 0.1 p_d$ (bottom), which may be applicable to different architectures.  In both cases, Bacon-Shor codes begin to demonstrate an advantage at errors rates $\lessapprox 10^{-3}$ at distance-$3$.  Furthermore, this advantage persists within reasonable leakage regimes through distance-$7$.  Each fit is of the form $Ap^{d_e} + Bp^{d_e + 1}$, where $d_e$ is the minimum number of faults that may cause a logical error and $A,B$ are the fitting parameters.}
\label{BaconShorData}
\end{figure}

At low distance, this advantage manifests at error rates as high as $p \lessapprox 1.2 \times 10^{-3}$.  Even when the leakage rate is an order-of-magnitude less than the depolarizing rate, the distance-$3$ Bacon-Shor code begins to outperform the corresponding surface code at error rates as high as $p \lessapprox 5.0 \times 10^{-4}$.  Of course, these gains are more pronounced at higher leakage-to-depolarizing ratios.  At lower error rates, one can observe the damage caused by leakage as the logical error suppression tapers off.  

However, as the number of stabilizer checks in Bacon-Shor codes scale sublinearly in the lattice size, the family does not exhibit a threshold.  Consequently, their advantage cannot persist at higher distances, although they can yield significant error-suppression which may suffice for the desired memory time \cite{Napp:2012, brooks2013fault}.  Coupled with simplified preparation and potential benefits against other correlated noise sources, this gives evidence that Bacon-Shor codes might prove advantageous in near-term fault-tolerance experiments or in low to intermediate distance error protection, within certain noise models \cite{Li:2018}. 

\subsection{Surface Codes with Unverified Cat States}

Thus far, we have focused on codes defined on square qubit lattices with bare-ancilla syndrome extraction.  While this might be preferable architecturally, we may relax this requirement to consider slightly larger ancilla states.  Unlike the preceding results, which can be extended straightforwardly to the presence of correlated leakage by passing to syndrome-$LR$, this relies unavoidably on an independent leakage model.  For some platforms, such as certain quantum dot architectures, these correlated leakage events must be handled.  For others, we can leverage the independence assumption to combine the locality of the Bacon-Shor code with the performance of the surface code by encoding the gauge-fix directly into the ancilla.  This amounts to performing the simplest Shor-style extraction, without verification or decoding.

Typical mechanisms for fault-tolerance involve preparing and verifying entangled ancilla states \cite{shor1996fault,divincenzo2007effective,steane1997active,knill2005scalable} or the use of flags \cite{chao2018quantum, chamberland2018flag}.  Unfortunately, as a leakage event can manifest many Pauli errors, it can often fool such verification schemes.  Although this can make typical fault-tolerant circuits more difficult to engineer \cite{fortescue2014fault}, the circuit locality of Shor-style syndrome extraction may provide a direct benefit when leakage events occur independently; see Figure \ref{cat}.

\begin{figure}[htb!]
\includegraphics[width=\linewidth]{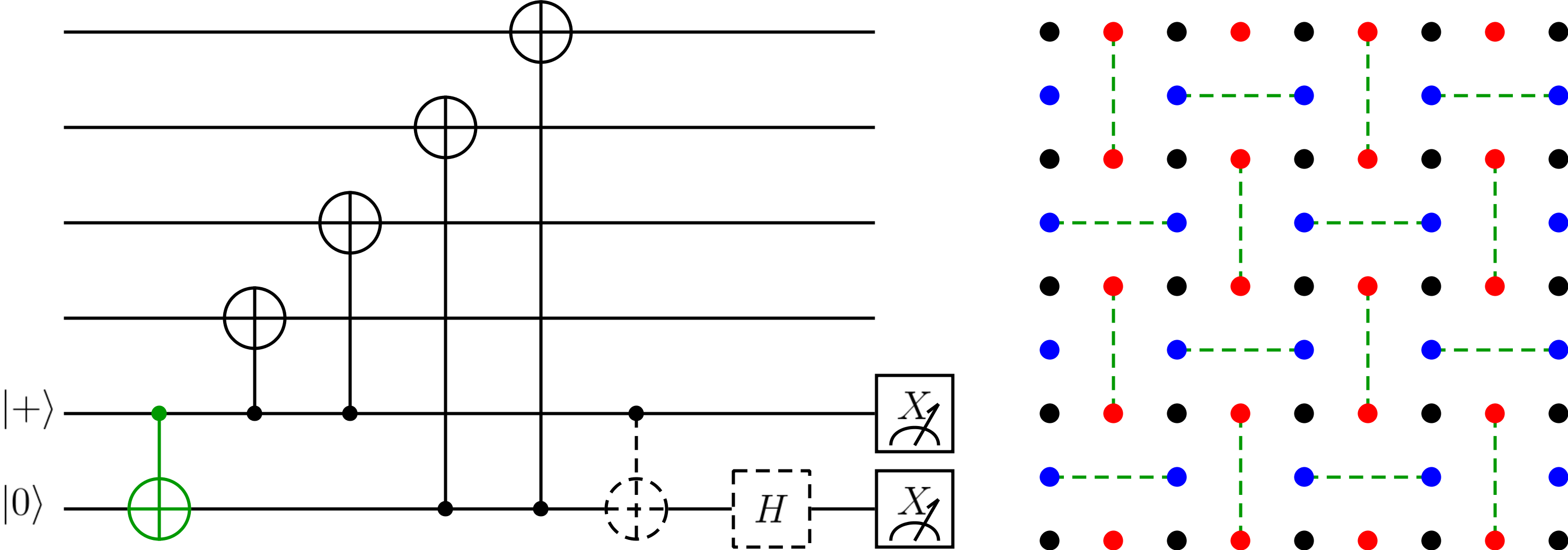}
\caption{Syndrome extraction on the rotated surface code with unverified two-qubit cat states.  The usual circuit involves decoding the ancilla \cite{Yoder:2017b}, with dashed gates omitted as verification can be fooled by leakage.  On the right is a distance-$5$ code with black dots representing data qubits, red dots representing $X$-type ancilla qubits, and blue dots representing $Z$-type ancilla qubits.  Each ancilla extracts the syndrome from its two nearest neighbor data qubits, with green connections representing longer-range ancilla entangling operations.}
\label{cat}
\end{figure}

Note that without entangling the ancilla, each measurement simply corresponds to measuring the gauge generators of the Bacon-Shor code.  In this sense, entangling the ancilla encodes a gauge-fix of the Bacon-Shor code to the surface code \cite{li20182}, while robustness to $MS$-leakage is unaffected by this leading operation.  One must at least pass to four-qubit cat states in the presence of $DP$-leakage, as any pair of qubits in the stabilizer supports two qubits of a minimal distance logical operator.

\begin{figure}[htb!]
\centering
\begin{subfigure}[b]{0.5\textwidth}
   \includegraphics[width=0.75\linewidth, height = 5.88cm]{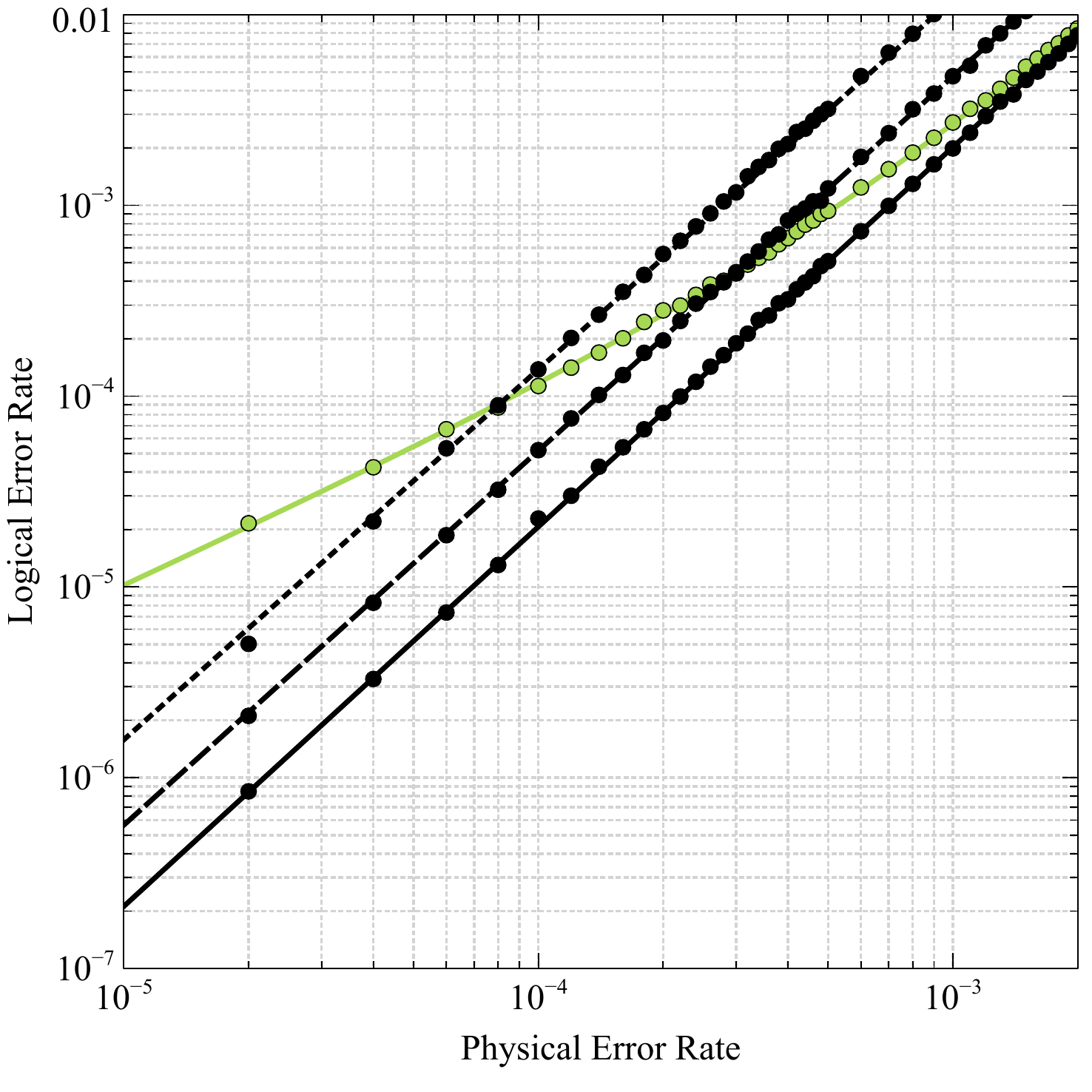}
   \caption{Distance-$3$}
   \label{fig:Ng1634sdgssdfsdg1} 
\end{subfigure}

\begin{subfigure}[b]{0.5\textwidth}
   \includegraphics[width=0.75\linewidth, height = 5.88cm]{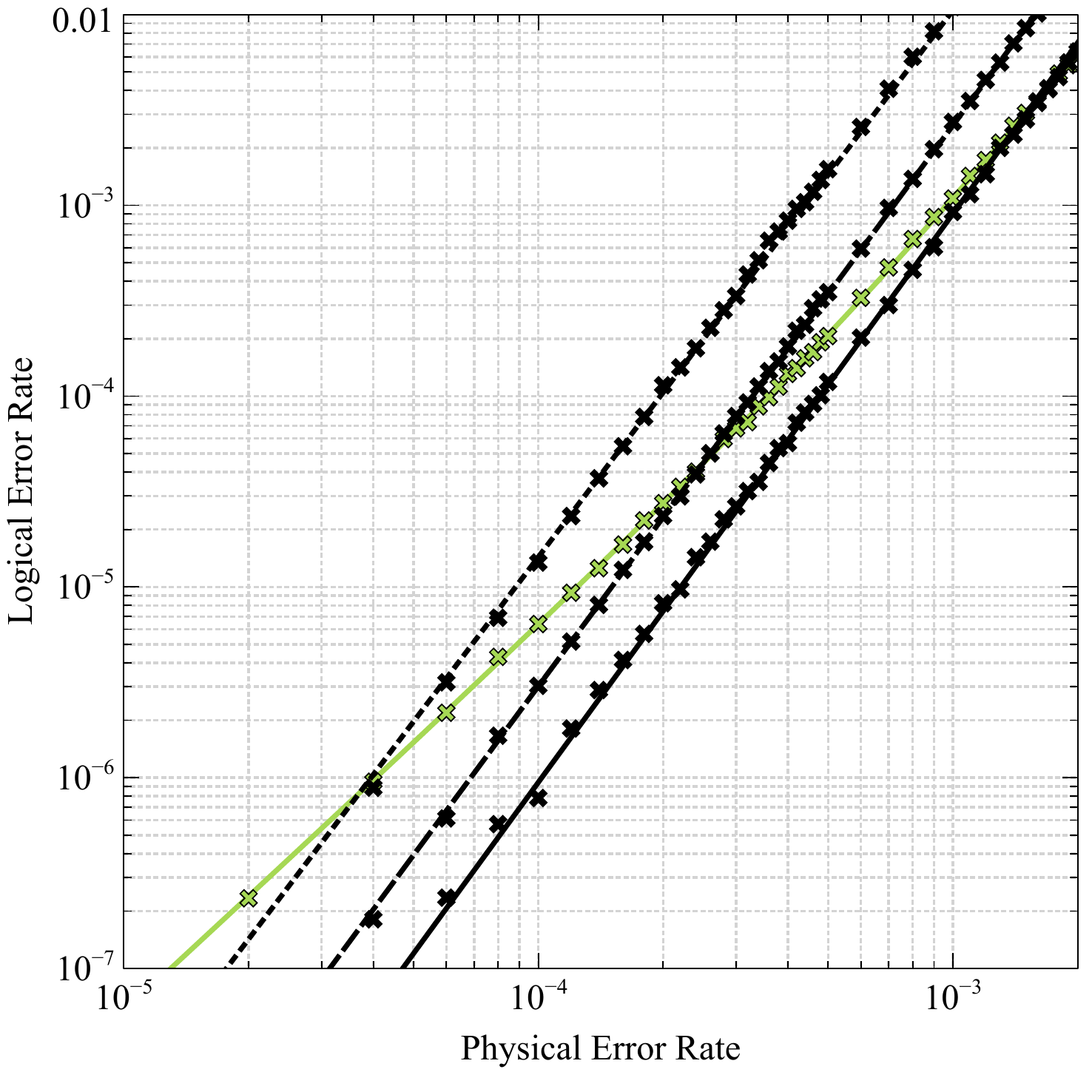}
   \caption{Distance-$5$}
   \label{fig:Ng265dhgasdfsdsh1}
\end{subfigure}

\begin{subfigure}[b]{0.5\textwidth}
   \includegraphics[width=0.75\linewidth, height = 5.88cm]{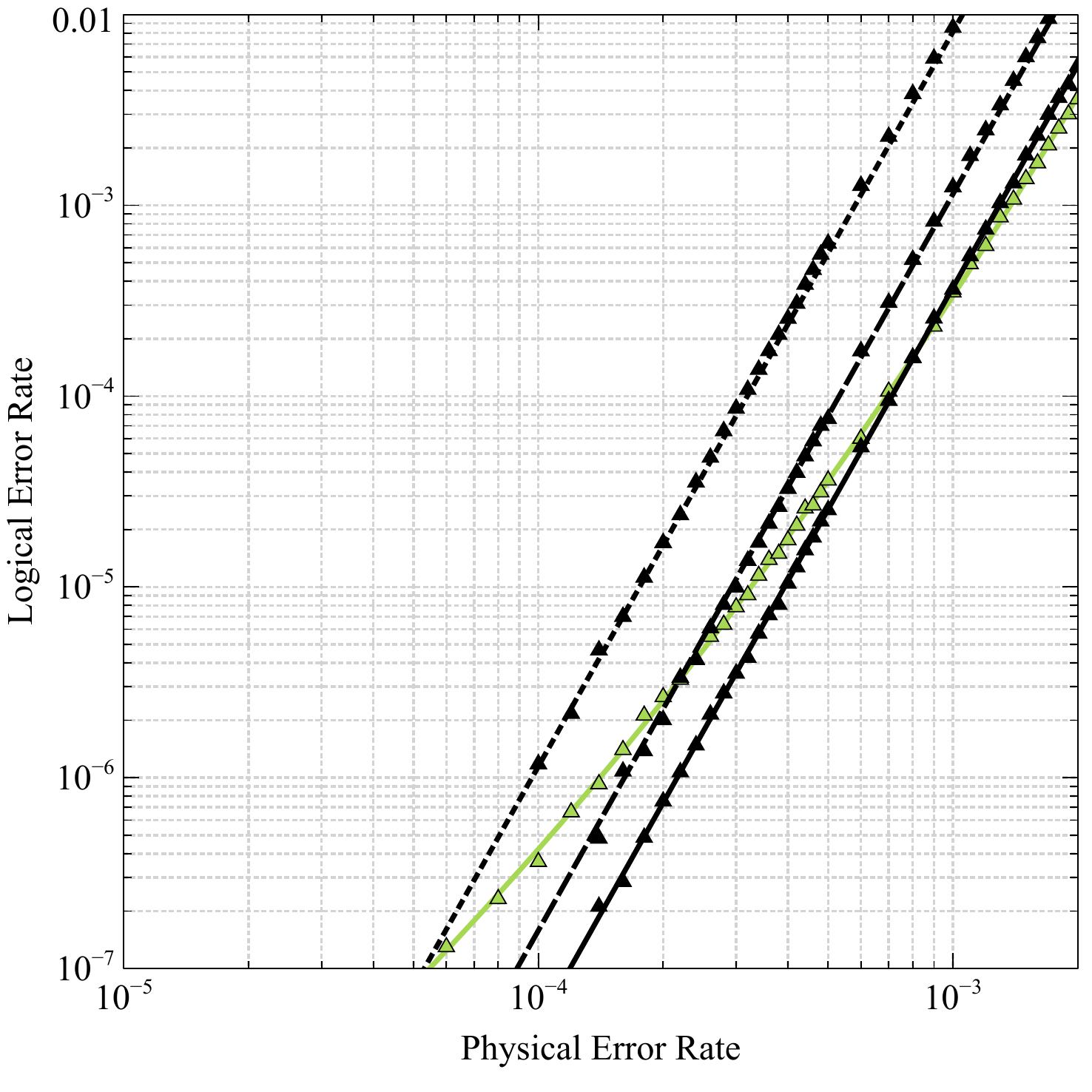}
   \caption{Distance-$7$}
   \label{fig:Ng265asdfsdhgdsh1}
\end{subfigure}
\caption{Error-rate comparison in the presence of $MS$-leakage on the rotated surface code.  Green represents swap-$LR$, the dotted black line represents the gate-$LR$ proposed in \cite{suchara2015leakage}, and the dashed black line represents intermediate-$LR$ (see Appendix \ref{auxiliary}).  The solid black line represents swap-$LR$ combined with unverified two-qubit cat state extraction.}
\label{cat_data}
\end{figure}

Compared to the previous fault-tolerance approach of \cite{suchara2015leakage}, in which leakage reduction is performed after each gate, we observe an orders-of-magnitude improvement in performance; see Figure \ref{cat_data}.  Similar to the Bacon-Shor code, using two-qubit cat state extraction outperforms bare-ancilla extraction below error rates $\approx 10^{-3}$.  However, as this family exhibits a threshold, we would expect that this advantage is better preserved at higher distances.  When compared directly to Bacon-Shor codes, two-qubit cat state extraction of the surface code performs marginally worse at distance-$3$, and then begins to outperform.

However, in the presence of correlated leakage events, the preparation of an entangled ancilla state may be badly corrupted.  In this case, two-qubit cat states do not preserve the effective distance, highlighting that the more damaging the error model, the more valuable the subsystem locality.  The resulting comparison between cat state extraction of the surface code and bare-ancilla extraction of the Bacon-Shor code in the presence correlated leakage events roughly mirrors the comparison between bare-ancilla extraction of both codes with independent leakage events; see Appendix \ref{correlated}.

\section{Conclusions}\label{Discussion}

In this work, we have highlighted an intrinsic error-corrective benefit of subsystem codes in the presence of leakage.  This is because leakage naturally manifests uncontrolled `hook' errors, whose damage is limited by local measurements.  We have quantified several crossover points below which these correlated errors are more damaging than the entropic effects that are well-handled by the subspace surface code.  These error rates range between approximately $10^{-5}-10^{-3}$ at the low distances we have considered, and may be relevant to limiting error models in certain architectures, such as spontaneous scattering in ion traps.

It is important to note that these advantages are modest.  For the geometrically local codes we have considered, subsystem error-correction only provides a benefit subject to certain code constraints or particularly damaging leakage models.  Even here, these benefits can only be observed at low error-rates that are only just being broached by current technologies \cite{ballance2016high, cerfontaine2019feedback, barends2014superconducting}.  However, there are certain architectures that could have even more harmful leakage dynamics or even higher leakage rates \cite{andrews2018,negnevitsky2018repeated}; in such cases, we would expect the subsystem code advantage to increase.

More generally, this work illustrates a potential application for subsystem codes which do not yield an immediate error-corrective advantage in independent depolarizing models.  As a generalization of subspace codes, subsystem codes offer a broad design space, with a variety of interesting constructions including Bravyi-Bacon-Shor codes \cite{bravyi2011subsystem, Yoder:2019}, the five-squares code \cite{suchara2011constructions}, subsystem color codes \cite{Bombin:2010b}, and others \cite{aly2008subsystem, sarvepalli2012topological, gayatri2018decoding, marvian2017error}.  Leakage is a natural error-model in which the subsystem structure of these codes offers a direct error-corrective benefit when compared with any gauge-fix, past architectural and parallelizability considerations.  We would expect similar advantages in other non-Markovian error models, where the relevant baths are local and determined by the qubit connectivity \cite{terhal2005fault}.

For more immediate practical applications, we have given evidence that leakage can be less damaging in certain architectures.  This occurs in a non-interactive model of leakage, and can be applied to M{\o}lmer-S{\o}rensen gates in ion-traps. In practice, one should leverage all of the specificities of their architecture to handle leakage events optimally. For systems that suffer from very damaging leakage, we have shown that Bacon-Shor or subsystem surface codes may be preferable, particularly in near-term fault-tolerance experiments at low distances. Finally, we have shown that in certain leakage models, using unverified two-qubit cat state extraction on the surface code may be preferable at error-rates $\lessapprox 10^{-3}$ at the same-distance.  This yields an order-of-magnitude improvement over existing fault-tolerant proposals for gate-by-gate leakage reduction \cite{suchara2015leakage}.

There are several extensions to this work worth consideration.  Most immediately, each of these codes and leakage reduction strategies should be compared on a qubit-to-qubit basis, given the constraints of a particular architecture and leakage model.  To do this, we would have to probe the low-error regime of higher-distance codes.  Generalized path-counting \cite{beverland2018role, fowler2012surface} or splitting methods \cite{bravyi2013simulation} may be better suited to the task than Monte Carlo simulations.  Ultimately, we are interested in performing very long computations, and so understanding code behavior at high distances and low error-rates is key.

Additionally, one could expand the search for leakage resilient codes past geometrically local codes, which are bounded asymptotically by $d = O(\sqrt{n})$, and consider more general low-density parity check codes \cite{bravyi2011subsystem}.  Even among geometrically local codes, one could consider more qubit-efficient encodings, such as the $4.8.8$ color codes \cite{landahl2011fault} or triangle codes \cite{Yoder:2017b}.   One could also try to generalize fault-tolerant syndrome extraction to account for leakage events \cite{fortescue2014fault}.  However, this will also introduce additional overhead, while preparation and verification of the required resource states will prove more difficult depending on the leakage model. Along the same line, one could investigate concatenated codes for non-local architectures.  However, leakage will further exacerbate the low thresholds inherent to concatenated codes.

One could also look for improvements using more advanced decoders or leakage detection \cite{suchara2015leakage}.  We have restricted our attention to minimum-weight perfect matching, which only captures edge-like correlated errors in the decoder graph.  Renormalization group \cite{duclos2010fast,duclos2010renormalization} or machine-learning methods \cite{maskara2018advantages,chamberland2018deep} could better handle higher-weight correlated error patterns.  These decoders might also utilize the extra information that is gained from auxiliary qubit measurement, which is wasted in a simple minimum-weight perfect matching scheme, and could lead to improved thresholds.  

Finally, one might consider subsystem codes in the presence of other correlated noise models \cite{fowler2014quantifying, terhal2005fault}, particularly those that are determined by the lattice geometry or gate connectivity, such as crosstalk. In these cases, local subsystem error-correction may provide benefits for handling correlated errors that lie outside the scope of independent depolarizing noise.

\section{Acknowledgements}
The authors thank Andrew Cross and Martin Suchara for providing the toric code simulator from which the other simulators were built, with permission from IBM.  They additionally thank Pavithran Iyer, Aleksander Kubica, Yukai Wu, and Ted Yoder for helpful comments and discussions.  Finally, they acknowledge the Duke Computing Cluster for providing the computational resources for simulations.  This research was supported in part by NSF (1717523), ODNI/IARPA LogiQ program (W911NF-10-1-0231), ARO MURI (W911NF-16-1-0349), and EPiQC - an NSF Expedition in Computing (1730104). 
\bibliography{bibliography}
\appendix
\clearpage

\section{Data Leakage}\label{myleakages}

In this section, we detail the requirements for a gate scheduling to ensure that data leakage in the $DP$-model does not damage the effective distance of subsystem surface codes.  Unfortunately, gate schedulings for subsystem codes are more complicated than for subspace codes.  This is because measuring anticommuting gauge operators may randomize their syndrome outcomes.  Consequently, interleaving and parallelizing syndrome extraction must be performed carefully.  For some architectures like ion-traps, errors are dominated by gate application, with memory errors occurring orders-of-magnitude less frequently \cite{trout2018simulating}.  In these cases, parallelization becomes relatively less important.  However, for other architectures like superconductors with gates limited by memory error, this parallelization is vital \cite{fowler2012towards}.

\subsection{Syndrome-$LR$}
There are two properties that we want our syndrome extraction to satisfy.  The first is that it is \emph{correct}: its outcomes correctly reflect Pauli errors on the data in the ideal case (i.e., there is no randomization due to measuring anticommuting operators consecutively).  The second property we require is that two triangular gauge operators of the same type overlapping on any qubit cannot both interact with that qubit in the second time-step after their initialization.  This requirement removes bad correlated errors due to data leakage.  We call a syndrome extraction scheduling satisfying this property \emph{good}.

In order to avoid the pitfall of measuring anticommuting operators, the simplest syndrome extraction is to first measure all $X$-type gauges in parallel, then measure all $Z$-type gauges in parallel.  In \cite{bravyi2013subsystem}, a rolling syndrome extraction was defined that ensured that no qubit was idle at any time-step during the syndrome extraction of the subsystem surface code.  To simplify the discussion, we perform the simplest syndrome extraction in our gate error model without idling errors. We further motivate this simplification by detailing a fully-parallelized (on data qubits) good and correct syndrome extraction scheme for subsystem surface codes using syndrome-$LR$ in Appendix \ref{scheduling}.

It remains to argue that data leakage may not cause damaging correlated errors in a good and correct syndrome extraction scheme. Note that if an ancilla interacts with a data leakage in the third time-step after initialization, then it interacts with no other data qubits after that point and cannot spread a correlated error.  If an ancilla interacts with a data leakage in the first time-step after initialization, then the resulting correlated error is equivalent to an error on the leaked qubit, up to gauge transformation.  Thus, two-qubit errors due to data leakage may only occur in triangular gauge operators which both interact with the leakage in the second time-step after initialization.
 
 Furthermore, each gauge operator may only propagate Pauli errors of the same type.  As error-correction is performed independently, it suffices for leakage robustness that no two gauge operators of the same type both introduce correlated errors due to a single leakage.  Thus, goodness of syndrome extraction ensures that any error of either $X$- or $Z$-type caused by a single data leakage is contained in one triangular gauge operator of that type.  As this falls into the ancilla leakage model, we may conclude that the effective distance of the code is preserved.  See Figure \ref{bad_data_leakage} for the damaging data leakage that might occur in a bad syndrome extraction scheduling.  This completes the analysis for subsystem surface codes using syndrome-$LR$.
 
 \subsection{Swap-$LR$}
 
Similar to data leakage using syndrome-$LR$, we must ensure that errors of either $X$- or $Z$-type produced by a single data leakage event are contained in the support of a single gauge operator.  In the case of syndrome-$LR$, long-range correlated errors could occur when two triangular gauge operators of the same type both interact with a data leakage in time-step two.  In the case of swap-$LR$, because an ancilla leakage swaps places with a data qubit in time-step three, long range correlated errors could occur when two triangular gauge operators interact with a data qubit in any combination of time-steps two and three.

Fortunately, Figure \ref{swap_scheduling} demonstrates a gate scheduling that avoids these distance-reducing correlated errors.  It includes a single long-range correlated error that is not too damaging, occurring when a data qubit occupying the short-side of a hexagonal stabilizer leaks; see the left-hand side of Figure \ref{bad_data_leakage}.
 
 \begin{figure}[htb!]
\includegraphics[width=\linewidth]{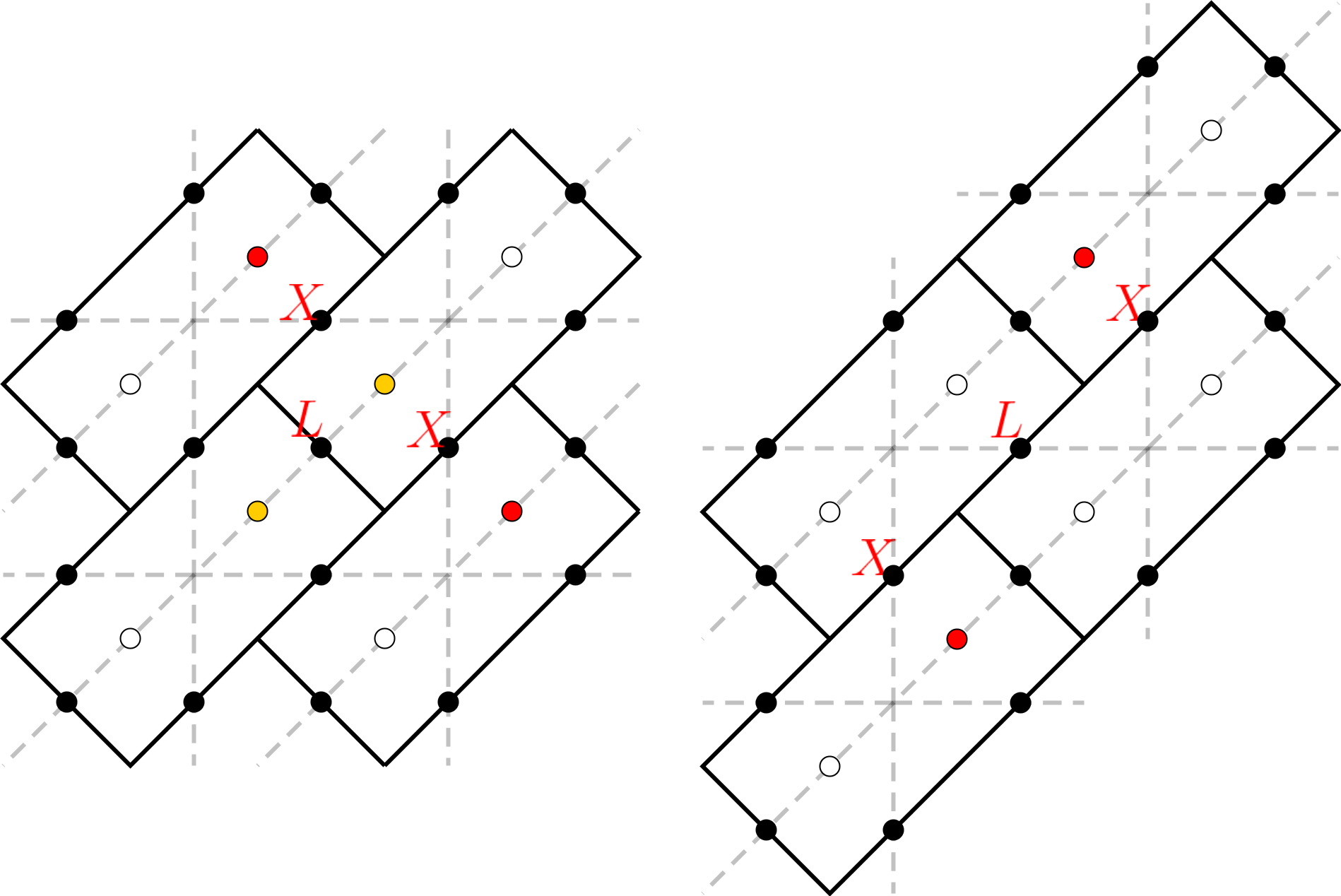}
\caption{Potential data leakage errors.  The left-hand error is equivalent to an ancilla leakage, producing the red excitations if it depolarizes to $I$ and the red plus yellow excitations if it depolarizes to $X$, and so it cannot reduce the effective distance.  In contrast, if the right-hand leakage error depolarizes to the identity, it causes a separated pair of excitations, reducing the distance. With good syndrome extraction, only one of the $X$-errors on the right-hand side (caused by propagation of an $X$-error to an ancilla) may occur.}
\label{bad_data_leakage}
\end{figure}

\section{Gate Schedulings}\label{scheduling}

\subsubsection{Simulation Details}

We follow the simulation technique in \cite{suchara2015leakage}: for a code of distance-$d$, we simulate $d$ time-steps of faulty syndrome extraction.  Then, we run an additional round of perfect syndrome extraction, converting any remaining leakage errors to depolarizing errors in the process.  As we consider the simulated round of error-correction as occurring during a longer computation, we associate a leading probability of leakage according to the lifetime of each data qubit at its initialization.  Each simulation ran until at least $200$ failures were recorded, and for those simulations that required less than $5 \times 10^{8}$ trials, until at least $1000$ failures were recorded.

For all simulations, we use the standard minimum-weight perfect matching decoder.  The edge weights are generated by iterating over all single Pauli faults, and associating a relative probability $p$ to each edge of the decoder graph.  The weight of the edge is then defined to be $-\ln(p)$.  Boundary conditions are handled according to the techniques of \cite{Wang:2009}.  

Note that we do not iterate over leakage events as they may cause fault configurations that result in hyperedges.  Furthermore, for codes that experience a distance reduction in the presence of leakage, we would have to insert edges lying outside a fundamental cell.  Thus, these edge weights are suboptimal.  However, to counteract this, we add edges corresponding to known single-time-step leakage fault configurations occurring with probabilities linear in $p$.  We heuristically weight these as four times the maximum edge probability generated by Pauli configurations.

\subsubsection{Timings}
For the subspace surface code, we use different but standard gate schedulings depending on the lattice geometry, see Figure \ref{subspace_schedulings}.  Note that the decoder graph is always populated by more edges than in the independent depolarizing model, corresponding to the additional correlated errors that may be caused by leakage.

 \begin{figure}[htb!]
\includegraphics[width=.5\linewidth]{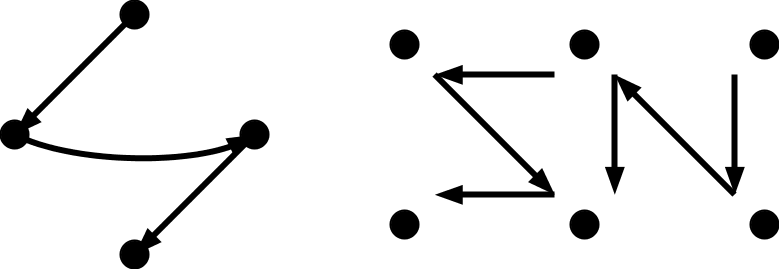}
\caption{Scheduling for the standard (left) and rotated (right) subspace surface codes.  On the rotated lattice, the left plaquette represents an $X$-type stabilizer, while the right represents a $Z$-type stabilizer.  These are chosen so that hook errors occur parallel to the boundary of the same type: in this case, an $X$-type north-south boundary and a $Z$-type east-west boundary.}
\label{subspace_schedulings}
\end{figure}

For the subsystem surface code, we use the trivial scheduling; see Figure \ref{subsystem_scheduling}.  It is straightforward to check that the syndrome extraction satisfies the goodness property.  For correctness, as the $X$- and $Z$-type syndromes are extracted serially, there can be no randomization due to anticommuting operators.

 \begin{figure}[htb!]
\includegraphics[width=.55\linewidth]{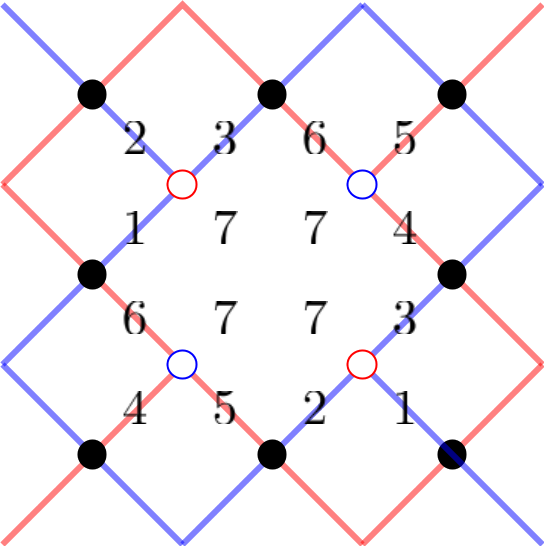}
\caption{Trivial scheduling for the subsystem code.  Dark circles are code qubits, red silhouetted circles are $X$-ancilla, and blue silhouetted circles are $Z$-ancilla.  The numbers refer to CNOT locations in a particular time-step, except for those in the center that denote measurement. Full syndrome extraction proceeds in seven time-steps (not including ancilla preparation). }
\label{subsystem_scheduling}
\end{figure}

However, we demonstrate that there also exists a good and correct syndrome extraction scheme that is fully-parallelized, i.e., there are no idling data qubits in any time-step.  For this, we will need a lemma from \cite{bravyi2013subsystem} giving criteria on when a syndrome extraction scheduling is correct.

\begin{Lemma}[Lemma 1, \cite{bravyi2013subsystem}]\label{bravyi_lemma}
A fully-parallelized schedule of CNOTs is correct if for any pair of $X$-type and $Z$-type triangular gauge operators, at least one of the following is true:
\begin{enumerate}
    \item[$(i)$] the two operators are disjoint;
    \item[$(ii)$] the two operators are measured two rounds apart;
    \item[$(iii)$] the last gate of the operator measured first in the pair commutes with the first gate of the operator measured last in the pair.
\end{enumerate}
\end{Lemma}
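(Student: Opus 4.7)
The plan is to analyze correctness via backward (Heisenberg-picture) propagation of each ancilla's terminal measurement observable through the joint extraction circuit. Correctness of the schedule amounts to showing that, for each gauge $G$, propagating its ancilla's final measurement observable back to the initial time yields $M_A \cdot G$ up to ancilla-initialized factors that evaluate to $+1$ on the prepared ancilla state, with no appearance of Pauli factors on the other gauge's ancilla that would randomize the recorded outcome. I would then dispatch the three cases in turn.

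Case (i) is immediate: if the two triangular gauges act on disjoint data qubits, then their sub-circuits act on disjoint supports of the combined data-plus-ancilla register. Every CNOT of one commutes with every CNOT of the other, so each ancilla's backward-propagated observable stays within its own gauge's support and cannot pick up Pauli factors on the opposite ancilla.

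Case (ii) reduces correctness to sequential measurement. If the extractions are two rounds apart, then by inspection of the schedule the first gauge's sub-circuit (including its ancilla measurement) completes before the second begins. Each measurement is therefore a genuine projective measurement on the state at its moment of execution; even when the gauges anticommute, the two ancillas are never simultaneously entangled with overlapping data qubits, so the Heisenberg back-propagation for each gauge remains confined to its own ancilla.

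The heart of the argument is case (iii), and the main obstacle. Here the two sub-circuits overlap in time, and one must exploit the structure of the fully-parallelized schedule: each triangular gauge uses a single ancilla and at most three CNOTs, with the ancilla idle before preparation and after measurement. Under these constraints, together with the fact that any pair separated by more than two rounds falls under case (ii), the only time step at which both sub-circuits can simultaneously address a shared data qubit is the ``handoff'' step containing the last CNOT of the earlier-measured gauge and the first CNOT of the later-measured one. The commutation hypothesis of (iii) then says precisely that in that single step, the parallel application of these two CNOTs is unitarily equivalent to their sequential application in either order; sliding the first-measured gauge's last CNOT to precede the second's first CNOT makes the entire sub-circuit of the first gauge precede that of the second, reducing (iii) to (ii). The delicate step is justifying that the window of potential interference is indeed a single time step and that, within it, the only CNOTs that can touch a shared data qubit are the two boundary CNOTs named in the hypothesis; I would establish this by an explicit case analysis on how the three CNOTs of each triangular gauge can interleave given the round-separation.
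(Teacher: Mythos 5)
First, a point of reference: the paper does not actually prove this lemma --- it quotes it and writes ``For proof, see \cite{bravyi2013subsystem}'' --- so the comparison is against the serialization argument in that reference. Your overall plan (back-propagate each ancilla's measurement observable and show the interleaved circuit is equivalent to a sequential measurement of the two gauges) is the right strategy and matches the cited proof in spirit. Case (i) is fine.

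There are, however, two concrete problems. In case (ii), ``measured two rounds apart'' does \emph{not} make the two sub-circuits temporally disjoint: with three CNOT rounds per triangle, the windows of two checks measured two rounds apart still share exactly one round, in which the last CNOT of the earlier check and the first CNOT of the later check occur simultaneously. The saving fact is that in a fully-parallelized schedule each data qubit participates in at most one CNOT per round, so those two gates act on disjoint supports and commute; case (ii) is thus an automatic instance of the serialization argument, not a claim of sequential execution. More seriously, the central claim of your case (iii) --- that interference is confined to a single ``handoff'' step containing the two named boundary CNOTs --- is both unproven and false in the regime where (iii) is needed (measurements zero or one rounds apart, hence two or three overlapping rounds). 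The obstruction to serialization is not simultaneity, which parallelization already forbids on a shared qubit, but \emph{ordering}: since all CNOTs within one check share its ancilla as control (resp.\ target) and hence mutually commute, the only non-commuting cross-pairs are the two CNOTs acting on each shared data qubit, and serializing the first-measured check $A$ before $B$ requires $A$'s CNOT on every shared qubit $q$ to precede $B$'s CNOT on $q$. One must show that the single commutation hypothesis in (iii) forces this for every shared qubit. For checks measured one round apart this works (a mis-ordered shared qubit is necessarily hit by $A$'s last-round CNOT and $B$'s first-round CNOT, exactly the pair (iii) constrains), but for checks measured in the same round a shared qubit can be hit by, say, $A$'s middle CNOT and $B$'s first CNOT, a pair that hypothesis (iii) says nothing about. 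That case analysis is where the entire content of the lemma lives, and your sketch explicitly defers it.
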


For proof, see \cite{bravyi2013subsystem}.  One can show that any scheduling which is translation invariant with respect to a single pair of overlapping hexagonal stabilizers, as shown in Figure \ref{subsystem_scheduling}, cannot simultaneously satisfy Lemma \ref{bravyi_lemma} and satisfy goodness.  

\begin{figure}[htb!]
\includegraphics[width=0.74\linewidth]{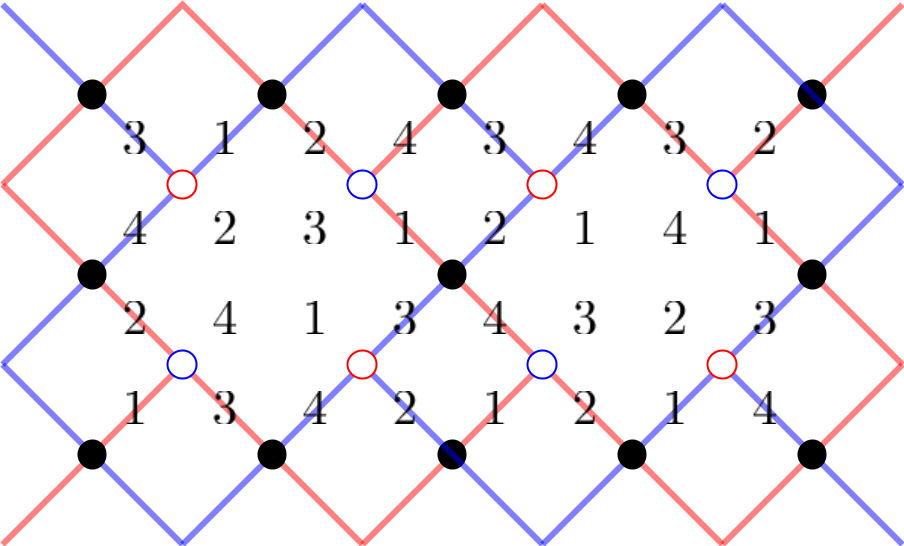}
\caption{Fully-parallelized good and correct scheduling.  Rolling syndrome extraction is performed over four different time-steps, with the middle numbers indicating when measurements are occurring.  A full scheduling can be realized as translations of this fundamental cell.  In the presence of leakage, one must insert swapping of the data with auxiliary qubits into the scheduling.}
\label{subsystem_scheduling_hard}
\end{figure}

However, there exist such schedulings that are translation invariant on adjacent pairs of overlapping hexagonal stabilizers, and such a scheduling suffices for odd distance codes with boundary.  See Figure \ref{subsystem_scheduling_hard} for an example, whose goodness can be checked and which satisfies the hypotheses of Lemma \ref{bravyi_lemma}.  This example demonstrates that there exist fully-parallelized syndrome extractions that are both good and correct.  Note that, although there are no idling data qubits, there will be (few) idling time-steps for the ancilla qubits as data qubits are swapped with auxiliary qubits.

Finally, for a swap-$LR$ scheme defined on subsystem surface codes, long-range correlated errors may occur when triangular gauge operators of the same type interact with a data qubit in any combination of time-steps two and three.  This can be avoided  for all but one data qubit (up to translation-symmetry), with every data qubit swapping with some ancilla in time-step three.  If that one data qubit is chosen properly, these long-range correlated errors do not damage the effective distance of the code. See Figure \ref{swap_scheduling} for the scheduling and the left-hand side of Figure \ref{bad_data_leakage} for the benign correlated error. 

\begin{figure}[htb!]
\includegraphics[width=0.55\linewidth]{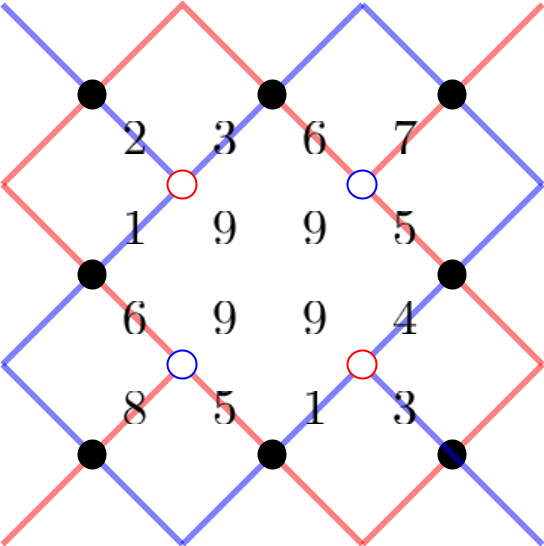}
\caption{Subsystem code scheduling for swap-$LR$.  The last gate in any triangular gauge operator swaps data and ancilla, except for the northeast ancilla qubit.  The northwest/southeast data qubit may cause a long-range $X$-type correlated error as it interacts with two ancilla qubits in time-step two.  However, unlike the damaging data leakage on the right-hand side of Figure \ref{bad_data_leakage}, these correlated errors may also be formed by a single $Z$-type ancilla leakage, up to gauge transformation.  By our classification of ancilla leakage errors, they cannot reduce the effective distance of the code.}
\label{swap_scheduling}
\end{figure}

\section{Auxiliary Qubit Leakage Reduction} \label{auxiliary}

There are many advantages to using minimal leakage reduction; for example, preserving the connectivity of the qubit lattice.  However, one could also consider strategies which involve many auxiliary qubits and LRUs.  We detail two such additional leakage elimination methods here.

\begin{enumerate}
    \item[$(i)$] \emph{Gate leakage reduction} (gate-$LR$) swaps the qubits involved in each gate with auxiliary qubits immediately after the gate is applied.  A single leakage may only persist for a single time-step, effectively converting the leakage model to a depolarizing model at the cost of significant overhead.  
    \item[$(ii)$] \emph{Intermediate leakage reduction} (int-$LR$) applies syndrome-$LR$, as well as swapping with an auxiliary qubit after the second CNOT in each syndrome extraction. For subspace surface codes, it swaps both the data and the ancilla with auxiliary qubits, while for subsystem surface codes, it swaps only the ancilla.
\end{enumerate}

Gate-$LR$ was studied in \cite{suchara2015leakage} as a means of retaining the effective code distance.  However, its tremendous overhead caused much lower thresholds, around $p_{\text{thr}} = 1.6 \times 10^{-3}$ when translated to our noise model. The higher thresholds of low-overhead subsystem code leakage reduction indicate better performance given that both preserve the effective distance of the code, but also depend on the assigned cost of the required auxiliary qubits.  By direct comparison, Bacon-Shor codes also perform better at low distances. 

\begin{figure}[t!]
\includegraphics[width=0.8\linewidth]{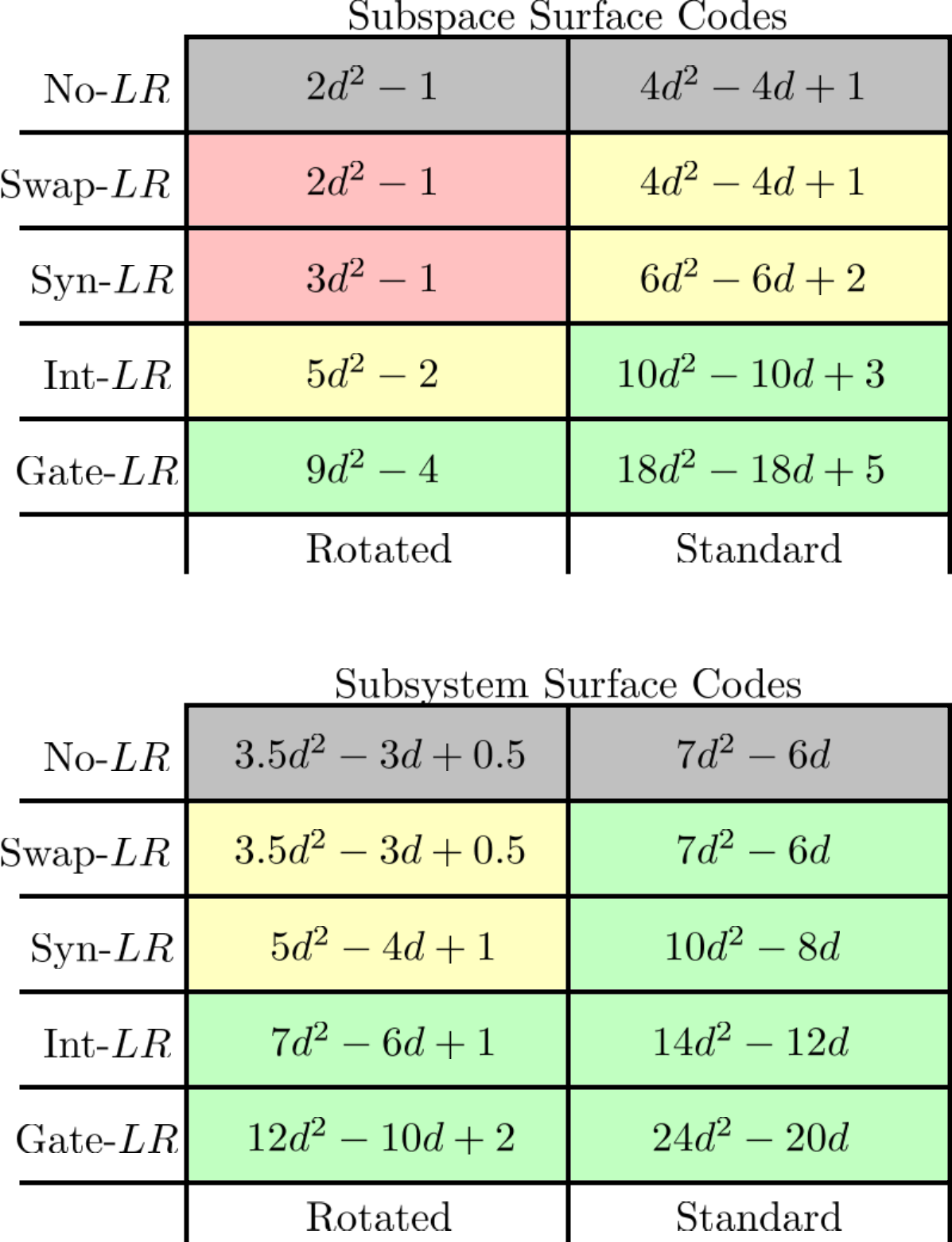}
\caption{A summary of the overall qubit requirements for different leakage reduction strategies, where $d$ is the distance of the code.  Red represents an effective distance reduction in both $MS$- and $DP$-leakage, yellow in only $DP$-leakage, green in neither, while grey indicates no threshold.  Motivated by qubit timings in ion-traps \cite{trout2018simulating}, each cell indicates the total qubit count (including data and ancilla) required in a single round of syndrome extraction.  The effective distance for swap-$LR$ is estimated numerically rather than analytically, with auxiliary qubits included for syndrome-$LR$.}
\label{Overheads}
\end{figure}

Intermediate leakage reduction is a more intermittent leakage elimination strategy, which can preserve the effective code distance with lower overhead but allows for more correlated errors compared to gate-$LR$.  We summarize these relationships in Figure \ref{Overheads}, which measures the circuit-volume overhead of each strategy.  Note that we do not include the extra $\theta(d)$ qubits required on the boundary of the qubit lattice in order to preserve the connectivity using swap-$LR$.

\section{Non-Topological Counterexample} \label{topcounterexample}

Here we illustrate a non-topological counterexample to Proposition \ref{main}.  It reflects a counter-intuitive point in the proof: the upper bound on $d_{\text{eff}}$ grows tighter as the diameter of the stabilizer decreases.  One would expect that larger stabilizers would give rise to larger correlated errors, and thus would give tighter upper bounds on $d_{\text{eff}}$. 

However, large overlapping stabilizer generators may prevent bad correlated errors from combining.  In fact, by relaxing the restriction of a topological generating set, we can manage $d_{\text{eff}} = d - 1$ in the surface code when decoding is performed carefully \cite{li20182}.  If we label each plaquette $i$ rows from the top and $j$ columns from the left as $P_{ij}$, and similarly for vertices $V_{ij}$, then this generating set may be realized as $$\left\langle\left\{\{\prod\limits_{k=0}^j X_{P_{ik}} \}_{j=0}^{d-2} \right\}_{i=0}^{d-1}, \left\{\{\prod\limits_{k=0}^i Z_{V_{kj}} \}_{i=0}^{d-2} \right\}_{j=0}^{d-1} \right\rangle.$$ Still, this is not a practical strategy as it requires measuring long-range stabilizers that scale with the lattice size, see Figure \ref{counterexample}. While this will unreasonably degrade code performance, it demonstrates that the direction of the upper bound and the hypotheses of Proposition \ref{main} are necessary.

\begin{figure}[htb!]
\includegraphics[width=0.85\linewidth]{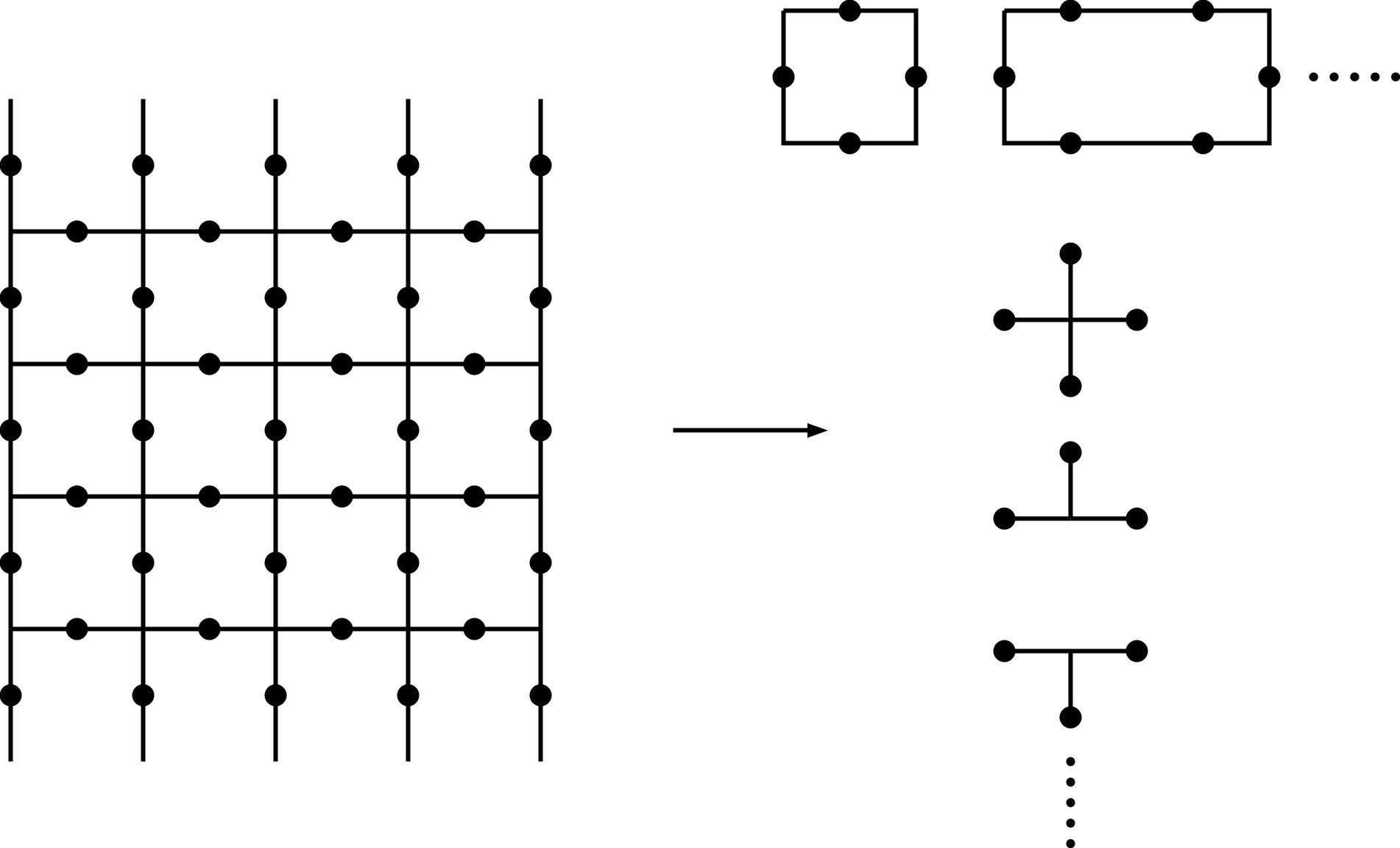}
\caption{A non-topological generating set that only reduces the effective code distance by one.  Each $X$-stabilizer is measured as an increasing product of plaquettes along each row, and each $Z$-stabilizer is measured as an increasing product of dual plaquettes along each column.  Note, for example, that a single leakage may produce an $X$-type error that spans the dual lattice from north to south, but this cannot produce a logical error.}
\label{counterexample}
\end{figure}

\section{Threshold Estimates} \label{threshold_estimates}
In this section, we directly compare subsystem surface codes to subspace surface codes, while accounting for the $1.75\times$ qubit overhead required for subsystem codes of the same distance and lattice geometry. Although we have focused on the low $p$ limit at low distances to remain simulable, we are more interested in the error-suppression at fixed $p$ for increasing $d$.  Up to a polynomial pre-factor, we can express $p_L(p,d) \sim \exp(-\alpha(p)d)$.  Determining the $p$ for which $\alpha_{\text{subsystem}}(p) > \alpha_{\text{subspace}}(p)$, optimized with respect to both the qubit overhead and geometry, will determine the error regime in which subsystem surface codes absolutely outperform subspace surface codes.

Unfortunately, in the low-error regime we consider, it is difficult to obtain accurate error estimates at even intermediate distances, and low distance estimates suffer from finite-size effects.  However, we can make heuristic estimates via the high-distance fit \cite{fowler2012towards}, \begin{equation}\label{myeqn}p_L(p,d) \sim \left(\frac{p}{p_{\text{thr}}}\right)^{d_e},\end{equation} where $p_{\text{thr}}$ is the threshold and $d_e$ is the estimated scaling, based on the near-fitting of Figure \ref{main:sim} to the minimum number of faults required to produce a logical error using syndrome-$LR$.  We can estimate the thresholds accurately as they occupy a higher-error regime, see Figure \ref{Thresholds}.

\begin{figure}[t!] 
\centering
\begin{subfigure}{0.25\textwidth}
\includegraphics[width=\linewidth]{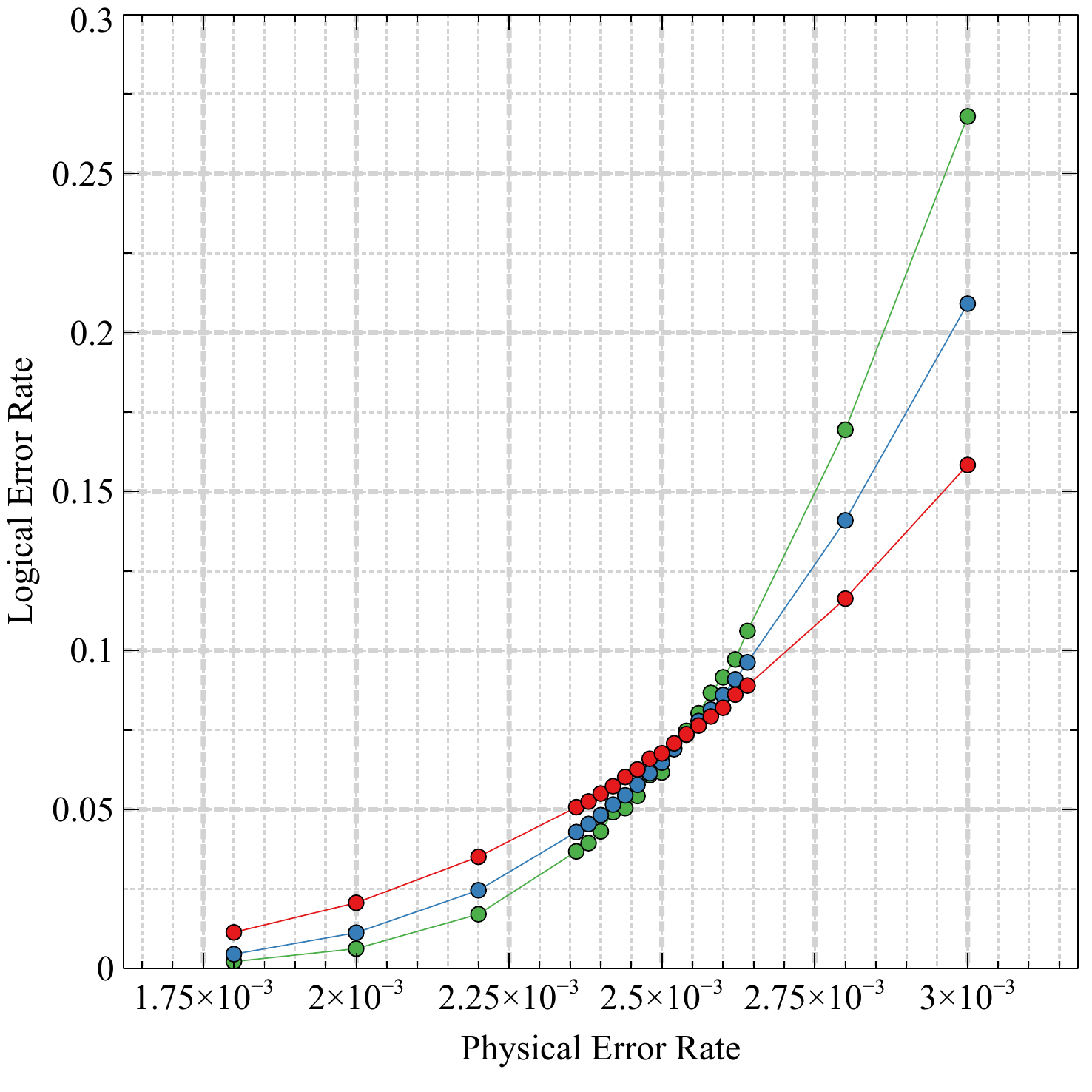}
\caption{$p_g^{DP} \approx 2.55 \times 10^{-3}$} \label{fig:asdg23}
\end{subfigure}\hspace*{\fill}
\begin{subfigure}{0.25\textwidth}
\includegraphics[width=\linewidth]{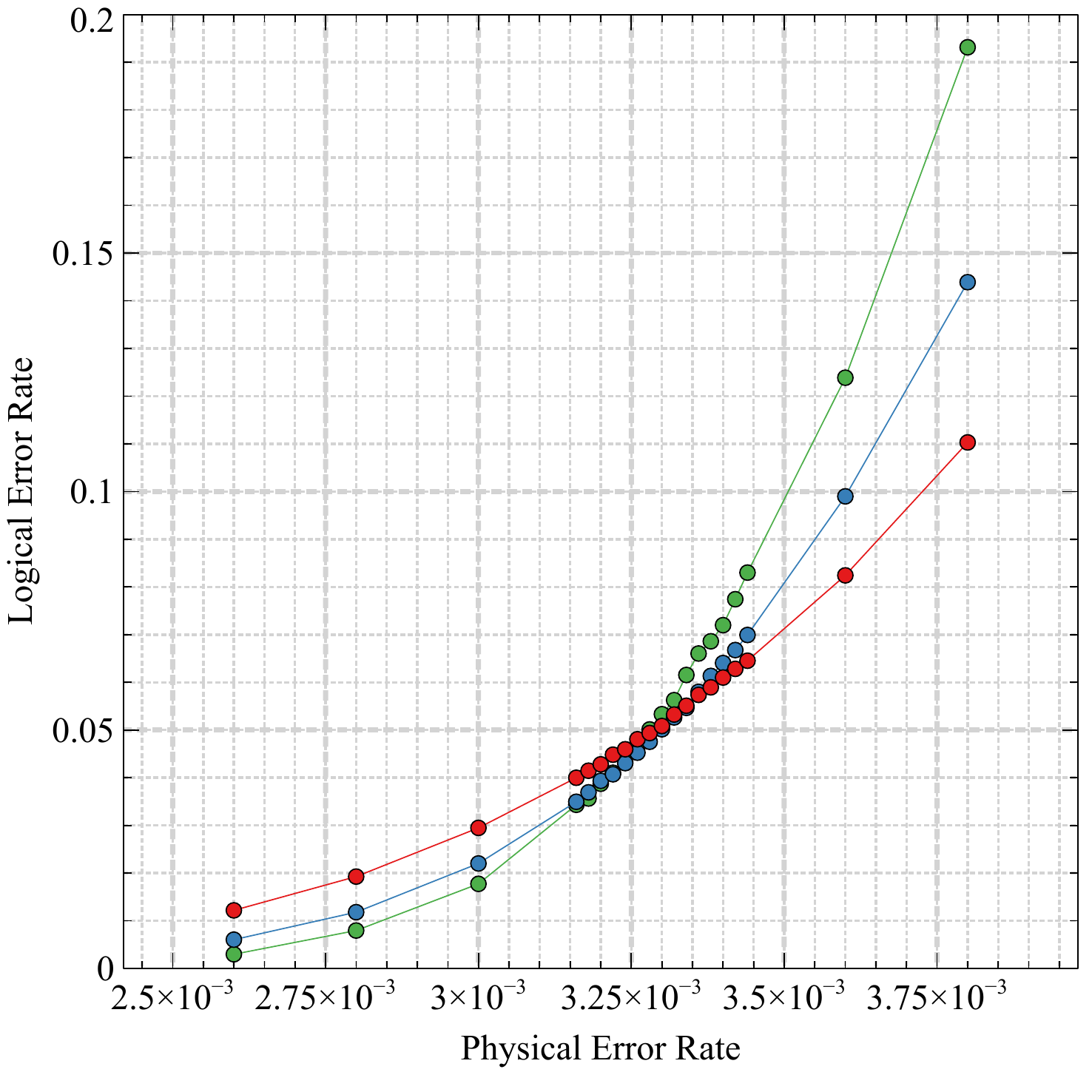}
\caption{$p_g^{MS} \approx 3.30 \times 10^{-3}$} \label{fig:b3asf4}
\end{subfigure}

\medskip
\begin{subfigure}{0.25\textwidth}
\includegraphics[width=\linewidth]{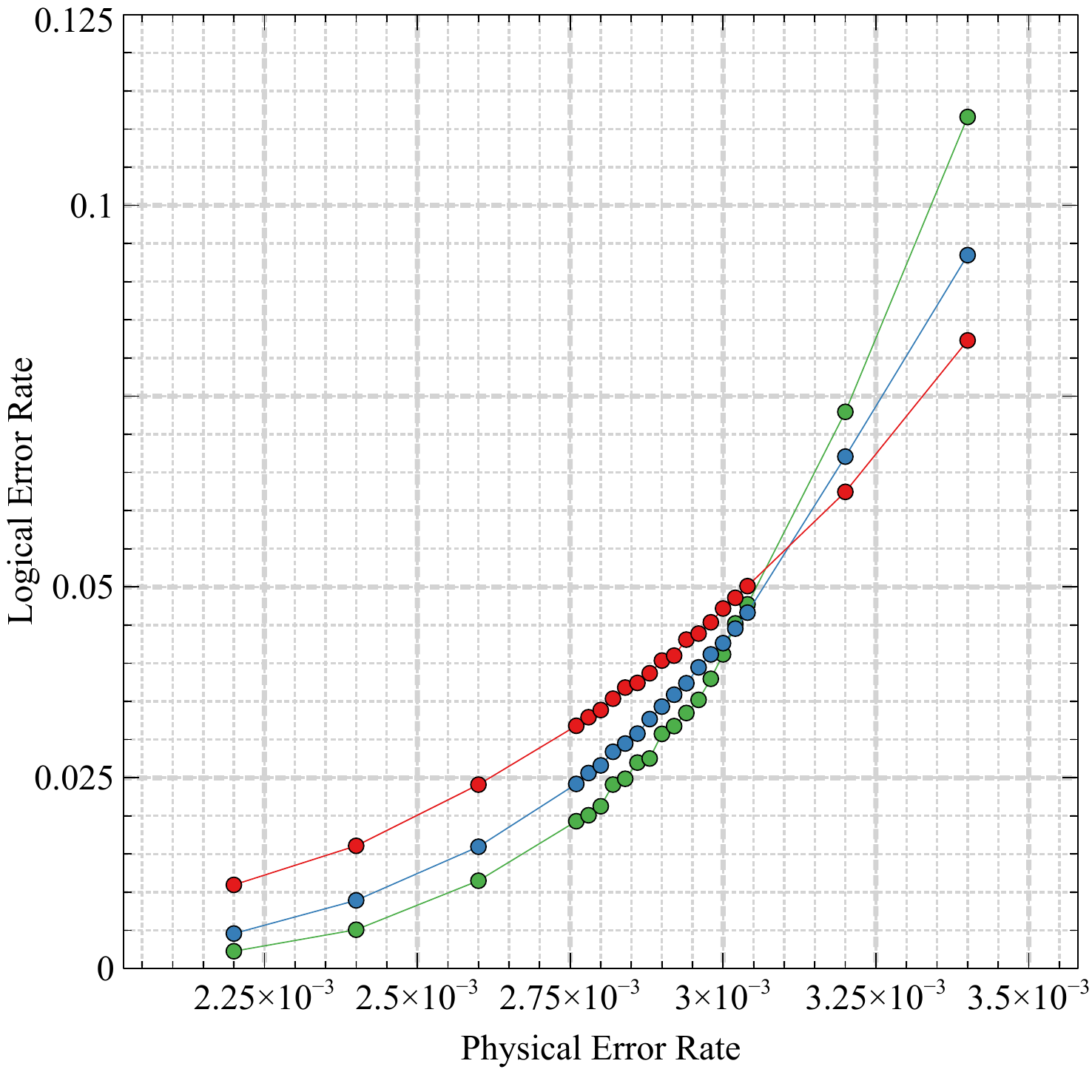}
\caption{$p_s^{DP} \approx 3.00 \times 10^{-3}$} \label{fig:casf45}
\end{subfigure}\hspace*{\fill}
\begin{subfigure}{0.25\textwidth}
\includegraphics[width=\linewidth]{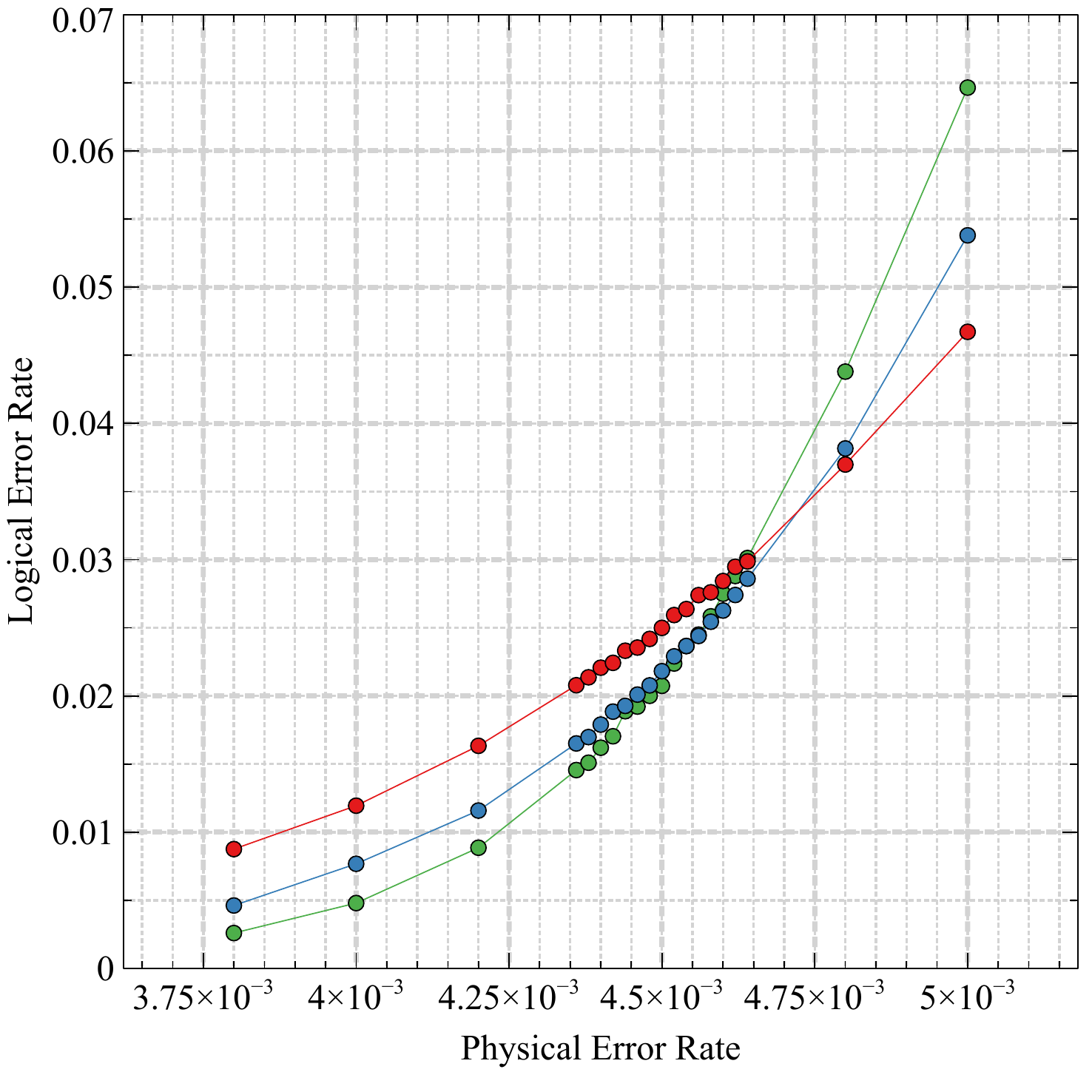}
\caption{$p_s^{MS} \approx 4.55 \times 10^{-3}$} \label{fig:d5asfd6}
\end{subfigure}

\caption{Thresholds simulated from distance $9,13,$ and $17$ lattices with periodic boundaries to minimize finite-size effects. The left column and right column represent $DP$- and $MS$-leakage, respectively, while the top row and bottom row represent subsystem and subspace codes, respectively.  Each data point was obtained from $5 \times 10^4 - 5 \times 10^5$ trials, with distance-$9$ lattices still exhibiting some finite-size effects.} \label{Thresholds}
\end{figure}

Given thresholds $p_g,p_s$ for a subsystem and subspace code, respectively, and normalizing qubit-overhead factors $\gamma_g,\gamma_s$ taken from Table \ref{Overhead}, we can upper-bound the error rate $p_*$ below which subsystem surface codes outperform subspace surface codes at sufficiently high distance from Equation \ref{myeqn}, $$p_* \sim \left(\frac{p_g^{\gamma_g}} {p_s^{\gamma_s}}\right)^ {\frac{1}{\gamma_g - \gamma_s}}.$$
This estimate yields $p_*^{DP} \approx 2.5 \times 10^{-4}$ in the presence of $DP$-leakage and $p_*^{MS} \approx 0.32 \times 10^{-4}$ in the presence of $MS$-leakage at sufficiently high distance.  Given Table \ref{Overhead}, these $p_*$ are obtained by comparing the optimal subsystem and subspace code choices for each leakage type.  These are again the standard subsystem codes and rotated subspace codes in the presence of $DP$-leakage, and the rotated subsystem codes and standard subspace codes in the presence of $MS$-leakage.

 Generally, fitting formulas like Equation \ref{myeqn} tend to underestimate the logical error rate at finite $p$ \cite{bravyi2013simulation}.  This, coupled with the observation that Pauli errors contribute more heavily at near-threshold errors rates, makes it likely that these fits already favor the subsystem surface codes, and hence only provide a heuristic upper-bound.

 It is worth noting that our simulations actually demonstrate better low-rate scaling with rotated subsystem codes than with standard subspace codes in the presence of $MS$-leakage, somewhat counteracting the overestimation of $p_*^{MS}$.  Further numerics are required to generate a more accurate estimate based on empirical scalings at higher distances.
 
 \section{Correlated Leakage} \label{correlated}
 
 For certain architectures like quantum dots, correlated leakage events may occur simultaneously on the full support of a two-qubit gate.  In this case, leakage robustness follows directly from considering correlated error patterns using swap-$LR$, where instead of leakage occurring on ancilla and data qubits in consecutive rounds, they occur simultaneously in the same round.  Then the same set of space-correlated errors will occur as long as one instead uses syndrome-$LR$.  Note that if one uses swap-$LR$ directly, then in the case of $MS$-leakage, we may depolarize two qubits in the support of the measurement: the data qubit that is leaked and the data qubit that is swapped with the leaked ancilla.
 
 We compare Bacon-Shor codes and surface codes with two-qubit cat state extraction in the presence of correlated leakage.  Here, preparation of the cat state itself may cause a damaging error.  The two codes display similar relative behavior when using bare-ancilla extraction with independent leakage, see Figure \ref{corr_data}.  By the preceding argument, subsystem surface codes should also achieve the correct logical error suppression.
 
 \begin{figure}[htb!]
\includegraphics[width=.8\linewidth]{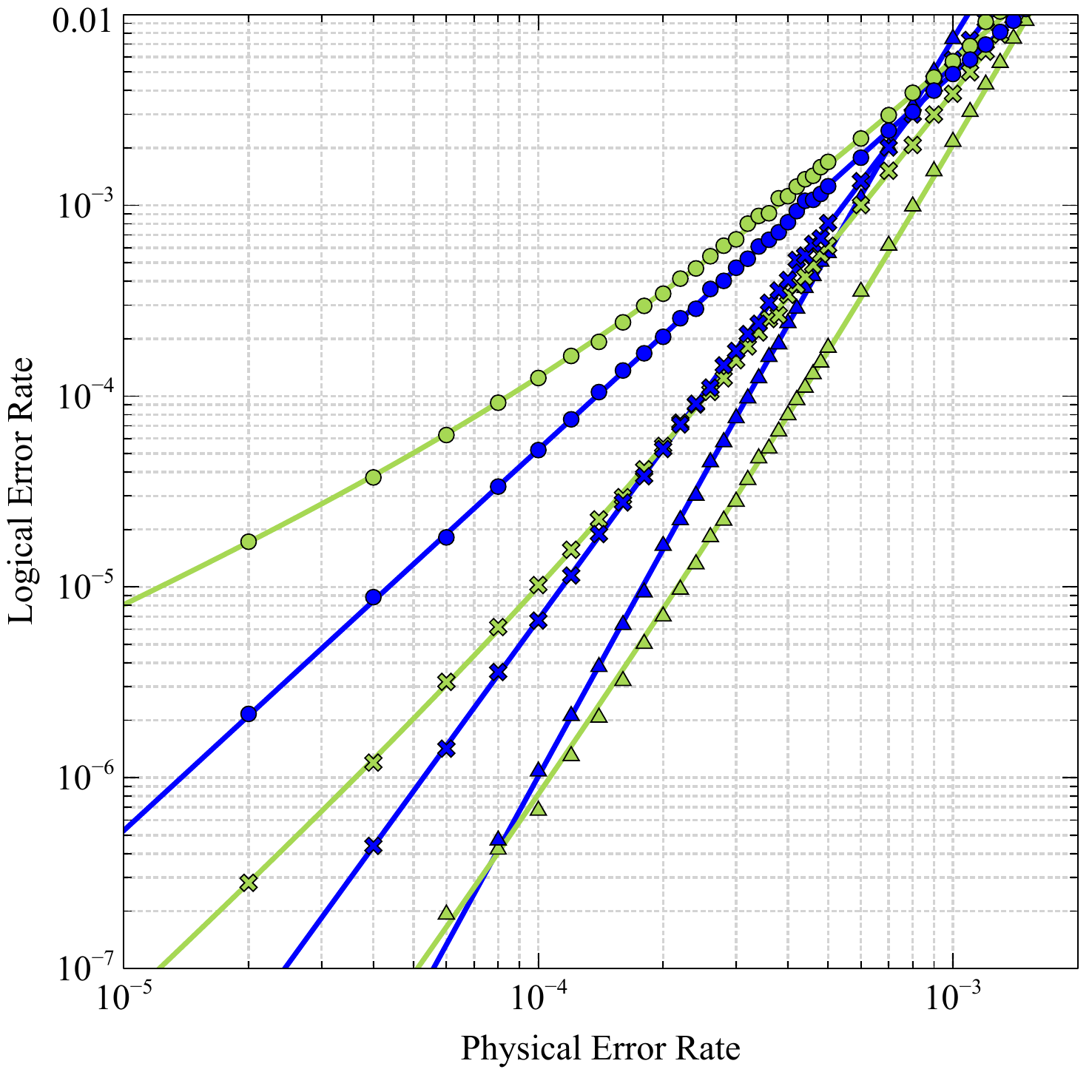}
\caption{Distance $3,5,$ and $7$ rotated subspace surface codes (green) compared with Bacon-Shor codes (blue) in the presence of correlated $MS$-leakage.  Bacon-Shor codes begin to demonstrate an advantage at errors rates $\lessapprox 10^{-3}$ at distance-$3$.  This advantage drops off quickly at higher distances.}
\label{corr_data}
\end{figure}

\end{document}